\newcommand{\dotp}[2]{\langle #1, #2 \rangle}
\title{Non-planar sensing skins for structural health monitoring based on electrical resistance tomography}
\theoremstyle{definition}
\crefname{ALC@unique}{line}{lines}
\let\citep\cite
\newcommand{\field}[1]{\mathbb{#1}}
\newcommand{\R}{\field{R}}
\newcommand{\norm}[1]{\|#1\|}
\newcommand{\abs}[1]{|#1|}
\newcommand{\Union}\bigcup
\newcommand{\Isect}\bigcap
\newcommand{\union}\cup
\newcommand{\isect}\cap
\newcommand{\bigunion}\bigcup
\newcommand{\bigisect}\bigcap
\newcommand{\defeq}{:=}
\def \uminus@sym{\setbox0=\hbox{$\cup$}\rlap{\hbox 
        to\wd0{\hss\raise0.5ex\hbox{$\scriptscriptstyle{-}$}\hss}}\box0}
    \def \uminus    {\mathrel{\uminus@sym}}
\def \weaktostar@sym{\setbox0=\hbox{$\rightharpoonup$}\rlap{\hbox 
        to\wd0{\hss\raise1ex\hbox{$\scriptscriptstyle{*\,}$}\hss}}\box0}
    \def \weaktostar    {\mathrel{\weaktostar@sym}}
\DeclareFontFamily{U}{mathx}{\hyphenchar\font45}
\DeclareFontShape{U}{mathx}{m}{n}{<-> mathx10}{}
\DeclareSymbolFont{mathx}{U}{mathx}{m}{n}
\DeclareMathAccent{\widebar}{0}{mathx}{"73}
\def\bar{\widebar}
\newcommand{\II}{\mathcal{I}}
\newcommand{\subfloatcolorbar}[4]{\hspace{#1}\vtop{\vskip#2\hbox{\includegraphics[width=#3,keepaspectratio,trim={0px 0px 0px 0px},clip]{#4}}}}%
\newcommand{\subcstageb}[9]{%
    \subfloat[#1]%
    {%
        \label{#2}
        \begin{minipage}{#3}%
            \includegraphics[width=1\linewidth, trim=#6 #7 #8 #9,clip]{#4}\\%
            \includegraphics[width=1\linewidth, trim=#6 #7 #8 #9,clip]{#5}%
        \end{minipage}%
    }%
}%
\newcommand{\subcstagec}[9]{%
    \subfloat[#1]%
    {%
        \label{#2}
        \begin{minipage}{#3}%
            \includegraphics[width=1\linewidth, trim=0 0 0 0,clip]{#4}\\%
            \includegraphics[width=1\linewidth, trim=#6 #7 #8 #9,clip]{#5}%
        \end{minipage}%
    }%
}%
\def\xset{M}
\author{
    Jyrki Jauhiainen\thanks{Department of Applied Physics, University of Eastern Finland, Kuopio, Finland. \email{jyrki.jauhiainen@uef.fi}}
    \and
    Mohammad Pour-Ghaz\thanks{Department of Civil Construction and Environmental Engineering, North Carolina State University, Raleigh, NC 27695, USA. \email{mpourgh@ncsu.edu}}
    \and
    Tuomo Valkonen\thanks{ModeMat, Escuela Politécnica Nacional, Quito, Ecuador \emph{and} Department of Mathematics and Statistics, University of Helsinki, Finland. \email{tuomo.valkonen@iki.fi}}
    \and
    Aku Seppänen\thanks{Department of Applied Physics, University of Eastern Finland, Kuopio, Finland. \email{aku.seppanen@uef.fi}}
}
\shorttitle{Non-planar ERT sensing skins for SHM}
\begin{document}
\maketitle

\begin{abstract}
    Electrical resistance tomography (ERT) -based distributed surface sensing systems, or sensing skins, offer alternative sensing techniques for structural health monitoring, providing capabilities for distributed sensing of, for example, damage, strain and temperature. Currently, however, the computational techniques utilized for sensing skins are limited to planar surfaces. In this paper, to overcome this limitation, we generalize the ERT-based surface sensing to non-planar surfaces covering arbitrarily shaped three-dimensional structures; We construct a framework in which we reformulate the image reconstruction problem of ERT using techniques of Riemannian geometry, and solve the resulting problem numerically. We test this framework in series of numerical and experimental studies. The results demonstrate that the feasibility of the proposed formulation and the applicability of ERT-based sensing skins for non-planar geometries.
\end{abstract}
\section{Introduction}
\label{sec:intro}
    A component of SHM is a sensor network  consisting of variety of sensors utilizing a variety of techniques, that continuously monitors the condition of the infrastructure \citep{worden2004overview}. While the sensing techniques have advanced significantly over the past twenty years, utilization of SHM to real-life infrastructure  is still relatively rare. Many factors contribute to the slow adaptation of SHM for infrastructure, including the high cost of implementing and maintaining, as well as difficulty of the interpretation of measurements. The interpretation of the measurements is especially challenging when a large number of discrete sensors are used without the utilization of a model-based interpretation approach.  Distributed sensors and sensing systems may offer an alternative that at times can be more cost effective. Especially, distributed sensors that are model-based and provide direct visualization of the data can overcome many of the limitations of discrete sensors. An example of such system is an electrical resistance tomography (ERT) -based sensing skin \citep{hallaji2014new}.
    
    ERT based sensing skin is a distributed surface sensing system that uses a layer of electrically conductive material (such as colloidal metallic paint \citep{hallaji2014new,Hallaji2014skin} or carbon nanotube film \citep{loh2009carbon,loh2007nanotube}) which is applied to a surface of a structure. Also, a set of electrodes are placed on the surface, and based on multiple electric current/potential excitations and measurements from the electrodes, the spatially distributed electrical conductivity of the sensing layer is reconstructed. The surface coating material is designed so that the changes in its electrical conductivity give information on physical or chemical conditions of the underlying structure.

    One application of ERT-based sensing skins is damage detection: \cite{loh2009carbon,loh2007nanotube,hallaji2014new,Hallaji2014skin} Cracking of the structure surface breaks also the sensor layer, decreasing the conductivity of the layer material locally. The ERT reconstruction, that represents the electrical conductivity of the layer, thus reveals the crack pattern on the surface. ERT based sensing skins have also been developed for detection of pressure changes \cite{chossat2015soft}, strains \cite{loh2009carbon,tallman2014damage}, pH changes \cite{hou2007spatial}, chloride ions \cite{citation:EITSens3}, and temperature distributions \cite{rashetnia2017detection}.

    In the above cited papers, ERT-based sensing skins were applied to planar geometry only. In many applications, however, the structures of interest have a complex three-dimensional geometry, and the surface to be monitored is non-planar; examples of such target structures are pipelines, pumps and pressure vessels.

    In addition to SHM, ERT-based sensing systems have been applied to robotics, where the sensing skin is used for detecting and localizing touch via pressure sensing \cite{alirezaei2007highly, alirezaei2009tactile, yousef2011tactile}. In publication \cite{silvera2012interpretation}, an ERT-based touch sensor made of conductive fabric was wound around an artificial arm. The winding did not cause wrinkles to the fabric, but since the fabric was bent, the geometry was non-planar. The computational model used in the study, however, assumed a planar geometry. Although earlier studies have indicated that at least certain sensing skin materials are very sensitive to stretching and bending \cite{alirezaei2007highly, alirezaei2009tactile}, neglecting these effects by the use of planar approximation did not cause significant reconstruction artifacts in  \cite{silvera2012interpretation}. Nevertheless. it is not guaranteed that the planar approximation works with all materials, especially when aiming at quantitative imaging \cite{Hallaji2014skin}. Even more importantly, in many potential SHM applications, the planar approximation of the sensor is impossible, because of the nontrivial topology of the surface. This is the case for example with all the geometries considered in the numerical and experimental studies of this paper (\cref{fig:Drawing}).

    Another application, very similar to SHM with sensing skin is the use of ERT with self-sensing materials \citep{tallman2014damage, tallman2015damage}. Recently, ERT imaging was applied to self-sensing composite tubes for damage detection, and the structure was non-planar \citep{thomas2019damage}. In this case, the 3D structure of the target material was modeled as in other 3D ERT applications \citep{vauhkonen2004image, brown2003electrical, loke1996practical}. While in the self-sensing applications, the structures – and thus also sensors – are inherently three-dimensional, in sensing skin applications the thickness of a sensor is several orders of magnitudes lower than its other dimensions. Clearly, this type of sensor can be modeled as a surface in three-dimensional space, and a full three-dimensional model would be unnecessarily complicated, making the computations more complex and more prone to numerical errors.
    
    In this paper, we formulate the problem of imaging a thin, electrically conductive surface material -- sensing skin -- applied on an arbitrarily shaped three-dimensional object by modeling it as a two-dimensional surface in the three-dimensional space, or, mathematically as a manifold. The mathematical framework of the formulation is referred to as \emph{differential geometry}. Since the formulation and its mathematical proofs are very technical, their details are left to an extended, technical version of this paper, published with open access in arXiv [arxiv-viite]. The focus of the journal paper is in the numerical and experimental evaluation of this approach. In numerical and experimental studies, we evaluate the approach in cases of three non-planar geometries. In these studies, we consider two target applications; crack detection and imaging of diffusive processes (such as heat conduction on solid materials).

\section{Non-planar ERT imaging}
    \label{sec:npei}
    In ERT imaging, the conductivity of the target is reconstructed from the voltage and current data obtained through a set of electrodes placed on the surface of the target.     
    Typically, the target is treated as a three-dimensional or as a planar two-dimensional domain. However, in order to reconstruct the conductivity of an arbitrary shaped sensing skin, we consider the target as an arbitrary surface in three-dimensions. 

    In this section, we first write a model that describes the ERT measurements given the surface conductivity; this is referred to as the \emph{forward model} of ERT, and it is approximated numerically using the finite element method (FEM). The \emph{inverse problem} of ERT is to reconstruct the conductivity given the current/potential measurements. The inverse problem is ill-posed in the sense that the "conventional" solutions to this problem are non-unique and extremely intolerant to measurement noise and modelling errors. For this reason, the solutions of the inverse problem require \emph{a priori} information on the conductivity, or \emph{regularization} of the problem \cite{kaipio2006statistical}. In this paper, we formulate the inverse problem as a regularized least-squares problem, where the data fidelity term utilizes the FEM approximation of the forward model.

\subsection{Modeling of measurements}
\label{sec:method}
    Consider a measurement setup in which the measurement data is obtained by sequentially setting each electrode to a known potential, grounding others, and measuring the electric current caused by potential difference. We note that many of the existing ERT measurement systems operate the other way round -- using current excitations and potential measurements. However, for a such system, the formulation of both the forward and inverse problem are analogous with the formulation written in this section. The choice of using potential excitations and current measurements is made, because the commercial measurement device employed for the experiments (Section 4) uses this procedure.

    The above described measurement setting constitutes the following forward problem: solve the electric current $I^p_k(\sigma)$ through each electrode $k$, given the spatially distributed conductivity
    $\sigma(x)$ (where $x=(x_1,x_2,x_3)$ is the spatial variable) and a set of electric potentials $U^p_k$ corresponding to an excitation $p$. We model this relation using the complete electrode model (CEM) \citep{cheng1989electrode} which consist of a partial differential equation and a set of boundary conditions,
    \begin{subequations}\label{eq:CEM}
        \begin{alignat}{4}
            \label{eq:CEM1}
                \quad\quad\quad\quad &&\nabla \cdot (\sigma(x) \nabla u^p(x)) &=0  &&x\in \xset, \\
            \label{eq:CEM2}
                \quad\quad\quad\quad &&u^p(x) + \zeta_{k} \sigma \dotp{\nabla u^p(x)}{\hat n} &= U^p_{k}   &&x\in \partial \xset_{e_{k}} , \\
            \label{eq:CEM3}
                \quad\quad\quad\quad &&\int_{\partial \xset_{e_k}} \sigma \dotp{\nabla u^p(x)}{\hat n}\: d\tilde S &= -I^p_{k},\\ 
                \label{eq:CEM4} 
                \quad\quad\quad\quad &&\sigma\dotp{\nabla u^p(x)}{\hat n} &=0   &&x\in \partial \xset \setminus \bigcup_{k=1}^L \partial \xset_{e_k},
        \end{alignat}
    \end{subequations}
    where $M \subset \R^3$ is a surface with boundary $\partial M$, $\partial \xset_{e_{k}}$ is the part of the $\partial\xset$ representing the edge of the $k$'th electrode, $\zeta_k$ is contact resistance, $-\hat{n}$ is an inward unit normal of $\partial \xset$ (i.e. a vector tangent to $\xset$, pointing inwards), and $L$ is the number of electrodes. In addition, the currents $I^p_k$ are required to satisfy Kirchhoff's law
    $
        \sum_{k=1}^L I^p_{k} = 0
    $.
    We write $d\tilde S$ for the infinitesimal length elements of the one-dimensional boundary $\partial M_{e_k}$.

    By calling $M$ a surface, we mean that we can parametrize $x=(\phi^1(y^1,y^2), \phi^2(y^1,y^2), \phi^3(y^1,y^2))$ \emph{locally} for some $(y^1,y^2) \in U \subset \R^2$ and some $x \in V \subset M$. This means that the functions and differential operators in \eqref{eq:CEM} are two-dimensional, and can be formally defined through Riemannian geometry.  
    
\begin{figure}[!tp]%
    \centering%
    \includegraphics[width=0.25\textwidth]{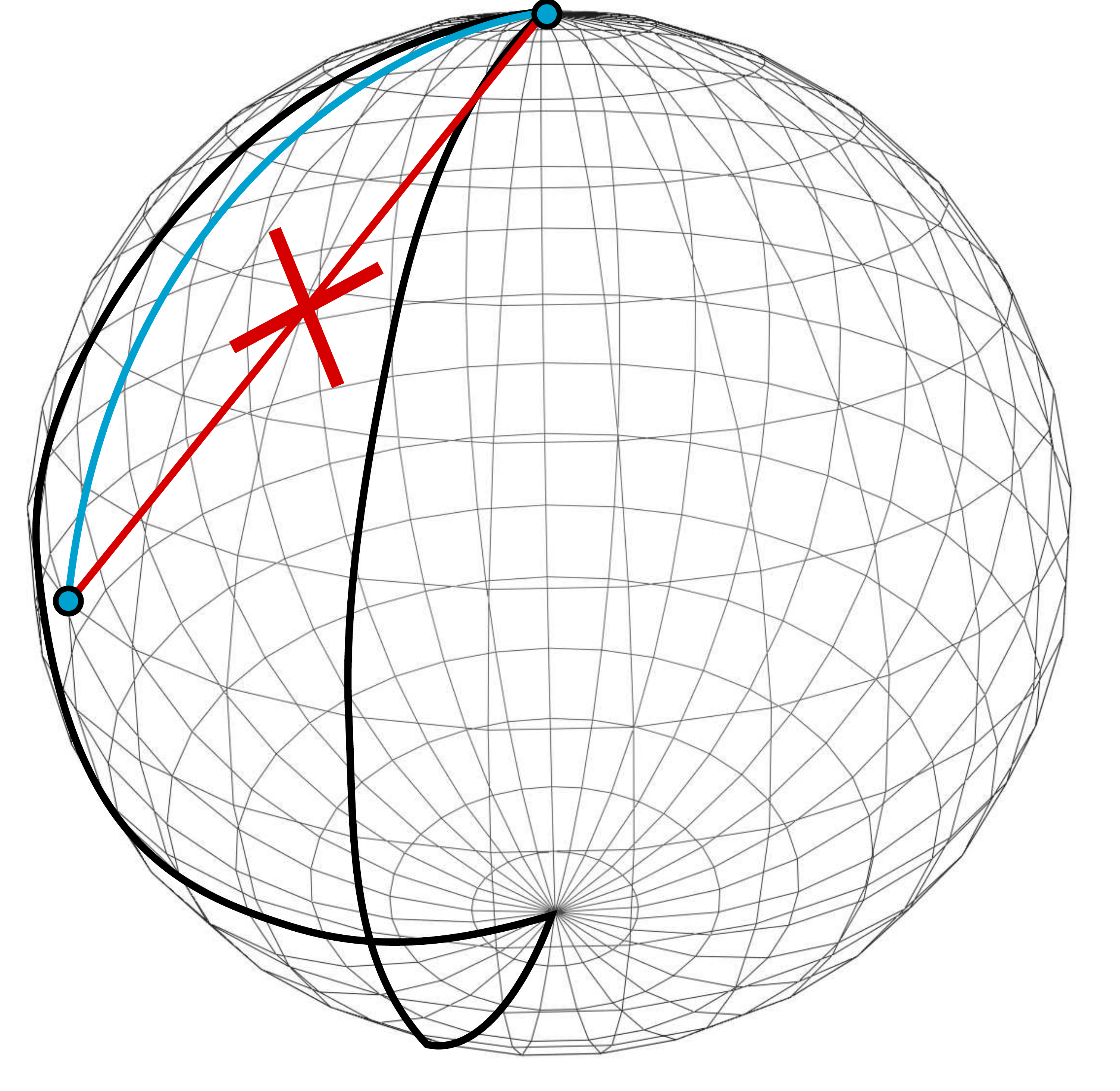}%
    \caption{An illustration of how the shortest path between two points in the non-planar two-dimensional model differs from the shortest path between these points in the three-dimensional model. Essentially the Riemannian metric determines how the distance is defined in the domain.}%
    \label{fig:Drawing}%
\end{figure}%

    Formally, we equip \emph{manifold} M with a \emph{Riemannian metric} $g$. Metric $g$ defines a product on tangent vectors analogous to a dot product in vector spaces (see \cref{fig:Drawing}) and it consequently defines the divergence and the gradient operators; 
    \begin{equation}
        \label{eq:grad}
        \nabla \cdot f =  \frac{1}{\sqrt{\abs{g}}} \sum_{i=1}^2  \partial_i \left( \sqrt{\abs{g}} f^i \right) \;\; \text{and}\;\; \nabla \hat f =  \sum_{i=1}^2 \sum_{j=1}^2 g^{ij} \partial_j \hat f \partial_i,
    \end{equation}
    where $f: M \to \R^2$ (e.g. $f= \sigma \nabla u^p$), $\hat f: M \to \R$ (e.g. $\hat f = u^p$). The maps $\partial_i$ generalize directional derivatives to $M$; technically $\partial_i: F(M) \to F(M)$, where $F(M)$ is a collection of differentiable functions on $M$ and $i=1,2$, form a local basis for the tangent plane. In this basis, $|g|$ is the determinant of the matrix formed from the components of $g$. Furthermore, $d\tilde S$ 
    in \eqref{eq:CEM} is the Riemannian volume measure of a curve (length in $\R^3$) in $(\partial M,g_\omega)$. Since in practice, M is an embbed manifold, we define $g$ as the \emph{pullback} of the standard dot product in $\R^3$ to $M$ and $g_\omega$ as the pullback of $g$ to $\partial M$ \cite{lee2013smooth,kenig2014calderon}.
    
    \subsection{Variational form and numerical approximation of the forward model}
    We approximate \eqref{eq:CEM} with a Galerkin finite element method, as described in detail in the technical \cref{app:impdet}.
    Indeed, by introducing test function $(v,V)$, we can write \eqref{eq:CEM} in a variational form
    \begin{equation}
        \label{eq:intss}
        \begin{aligned}
          &\int_{M} \sigma \dotp{\nabla v}{ \nabla u^p}_g dS + \sum_k^L \tfrac{1}{\zeta_k}\int_{\partial M_{e_k}} u^pv - \sum_k^L \tfrac{1}{\zeta_k}\int_{\partial M_{e_k}} u^pV_k d\tilde S \\&- \sum_k^L \int_{\partial M_{e_k}} \sigma \dotp{\nabla u^p(x)}{\hat n}_{g_\omega}V_k d\tilde S 
          = \sum_k^L \tfrac{1}{\zeta_k} \int_{\partial M_{e_k}} U_k(v-V_k) d\tilde S.
        \end{aligned}
    \end{equation}
    We write $dS$ for the infinitesimal area elements of the two-dimensional surface $M$.
    Notation-wise, the variational form (2.3) is almost the same as the one written for the 3D electrical \cite{VossThesis2020}. However, the functions in \eqref{eq:intss} are defined on only the surface $M \subset \R^3$, the differential operators according \eqref{eq:grad}, and inner products are defined with respect to the Riemannian metrics $g$ and $g_\omega$.
    
    Furthermore, by approximating $u^p = \sum_j^N u^p_j v_j$ and $I^p_k$ as $I^p_k = (\sum_j^{L-1} \tilde I_j n_j)_k$, where $v_j$ is piecewise linear and $n_j \in R^{L}$ such that the first component of $n_j$ is always $1$ and the $j+1$ component is $-1$ and other indices are zero, \cref{eq:intss} admits the matrix form
    \begin{equation}\label{eq:feacem}
        \begin{bmatrix}
            D_1 & 0 \\
            D_2 & D_3
        \end{bmatrix}
        \begin{bmatrix}
            \bar u\\
            \bar I
        \end{bmatrix}= \begin{bmatrix}
            \tilde U_1 \\ \tilde U_2
        \end{bmatrix},
    \end{equation}
    where the matrices
    \[
        \begin{aligned}
        (D_1)_{i,j} &= \int_{M} \sigma \dotp{\nabla v_i}{ \nabla v_j}_g dS + \sum_k^L \tfrac{1}{\zeta_k}\int_{\partial M_{e_k}}  v_jv_i  d\tilde S,\\
        (D_2)_{i,j} &= \tfrac{1}{\zeta_{i+1}}\int_{\partial M_{e_{i+1}}} v_j d\tilde S -\tfrac{1}{\zeta_1}\int_{\partial M_{e_1}}  v_j d\tilde S,\\
        (D_3)_{i,j} &= 
        \begin{cases} 
                2, & i=j \\ 
                1, & \text{otherwise}
        \end{cases}
    \end{aligned}
    \]
    and the vectors $ (\bar{u})_{i} = u^p_i$, $ (\tilde U_1)_i = \sum_k^L \frac{U_k}{\zeta_k} \int_{\partial M_{e_k}} v_i d\tilde S$, $(\bar I)_i = \tilde I_i$, and
    $
        (\tilde U_2)_i = -\sum_k^L \tfrac{1}{\zeta_k} \int_{\partial M_{e_k}} U_k(n_i)_k d\tilde S.
    $
    \subsection{Inverse imaging problem}
    \label{sec:IIP}
    We can now concatenate the simulated measurements to form a vector $I(\sigma) = ( I_1^1(\sigma),  ...,  I_n^n(\sigma) )^T$. Further, we denote the vector containing the corresponding measured data by $I^M$.

    The typical approach to solve the inverse problem of EIT is to solve a conductivity that minimizes the sum of a so-called \emph{data term}, $\tfrac{1}{2}\norm{L(I(\sigma) - I^M)}^2$, and a \emph{regularization functional} $F(\sigma)$. In sensing skin applications, however, we may improve the reconstruction quality by utilizing measurements $I_{\mathrm{ref}}^M$, measured from an initial stage where the sensing skin is intact \cite{Hallaji2014skin}, to compute a homogeneous estimate $\sigma_{\mathrm{ref}}$ for the initial (background) conductivity of the sensing skin;
    \begin{equation}\label{eq:minJH}
    \sigma_\text{ref} \defeq \arg \min\limits_{\sigma \in \R^+} \tfrac{1}{2} \norm{I(\sigma) - I^M_\text{ref}}^2.
    \end{equation}
    Based on this estimate, we compute a discrepancy term $\epsilon\defeq I^M_\text{ref} - I(\sigma_\text{ref})$ which gives an approximation of the modeling error caused by neglecting the inhomogeneity of the background conductivity of the sensing skin. To compensate for the modeling error in the reconstruction of the conductivity $\sigma$ in the \emph{subsequent} stages, we add this term into the model $I(\sigma)$ \cite{Hallaji2014skin}, and reconstruct the conductivity $\sigma$ as a solution of a minimization problem
    \begin{equation}\label{eq:minJ}
        \hat \sigma \defeq \arg\min\limits_{\sigma \in V} \tfrac{1}{2}\norm{L(I(\sigma) - I^M + \epsilon)}^2 + F(\sigma),
    \end{equation}
    where $V = \left\lbrace f(x) \in H_N(M) \;|\; \sigma_\text{min} \le f(x) \le \sigma_\text{max},\; \forall x \in M \right\rbrace$, $H_N(M)$ is a finite dimensional function space on $M$, and $L$ is a matrix for which $L^T L$ is so-called \emph{data precision matrix}. The matrix $L^TL$ accounts for the magnitude of noise in the measurements. Furthermore, the lower constraint $\sigma_\text{min} > 0$ comes from the natural, physics-based limit for the positivity of the conductivity and the upper constraint $\sigma_\text{max}$ restricts the conductivity from above whenever the maximum conductivity is known. In cases where the maximum conductivity is unknown, we set $\sigma_\text{max}$ to an arbitrary large number. 
    
    Note that the regularization function $F(\sigma)$ in \eqref{eq:minJ} is chosen depending on the information that is available about the conductivity prior to the measurements.
    In the numerical and experimental cases of the following sections, we consider two choices of regularization functionals. We note, however, that the non-planar ERT scheme proposed in this paper is not restricted to any particular choices of regularization. Although the above modeling error correction method based on the discrepancy term $\epsilon$ is highly approximative, it has shown to be useful in several cases with real data \cite{Hallaji2014skin}, and is thus used also in this paper. A more advanced formulation of the inverse problem for detecting complex crack patterns in the presence of inhomogeneous background was proposed in \cite{smyl2018detection}. We note that, if needed, this computational method would also be directly applicable to the non-planar ERT model described above.
\section{Numerical simulation studies}
    \captionsetup[subfloat]{position=top,labelformat=empty}%
\begin{figure}[t]


    \centerline{
\begin{minipage}{0.65\textwidth}%
    \centering%
        \subcstageb{}{sfig:psgeo}{0.333\linewidth}{"Figures/Pipe/PipeWF_w2".jpg}{"Figures/Pipe/PipeWF_w3".jpg}{80}{0}{100}{30}%
        \subcstageb{}{sfig:pvgeo}{0.333\linewidth}{"Figures/Chamber/ChamberWF_w1".jpg}{"Figures/Chamber/ChamberWF_w3".jpg}{80}{60}{160}{60}%
        \subcstageb{}{sfig:cgeo}{0.333\linewidth}{"Figures/Cube/CubeWF_w2".jpg}{"Figures/Cube/CubeWF_w1".jpg}{80}{0}{80}{0}%
    \end{minipage}}%
    \caption{Geometries of the sensing skins used in the numerical simulation studies (left and middle column) and in experimental study (right column). The surface triangulations correspond to the finite element meshes used in the respective image reconstructions. The square shaped non-triangulated patches of the surface represent the electrodes for the electrical measurements.}%
    \label{fig:Geom}%
\end{figure}%
\captionsetup[subfloat]{position=bottom}
    We evaluate the proposed ERT imaging scheme with numerical simulation studies using two non-planar geometries; one resembles a pipe segment (first column in \cref{fig:Geom}) and one resembles a pressure vessel (second column in \cref{fig:Geom}). The figures also illustrate the locations of the electrodes. We note that majority of them are \emph {internal electrodes}, in the sense that they are surrounded by the sensing skin. Such a setting is chosen in order to improve the sensitivity of ERT measurements; the use of internal electrodes improves the quality of ERT reconstructions from the case where all electrodes are in the perimeter of the sensing skin even in planar geometries \citep{rashetnia2018electrical} -- in non-planar imaging the effect is presumably even stronger. Furthermore, we consider two target applications; crack detection and imaging of diffusive processes (such as distributed temperature sensing \citep{rashetnia2017detection} or strain measurement \citep{tallman2015tactile,tallman2016damage}).

    Both geometries are used to study crack detection (Cases 1 and 2). In each geometry, we consider five stages of cracking. In the first stage, stage 0, the sensing skin is intact and the conductivity is homogeneous. Measurements simulated in this stage are used as the reference measurements $I_{\mathrm{ref}}^M$ and utilized for computing the homogeneous background estimate \eqref{eq:minJH}. In the subsequent stages, to simulate evolving crack pattern, we lower the conductivity at the locations that correspond to the cracks. The diffusive process imaging is studied in Case 3, where the geometry is same as in Case 1. Here, the conductivity distribution is spatially smooth, and it evolves in the diffusive manner, mimicking an application where the surface temperature distribution is monitored using a sensing skin.

    \subsection{Specification of geometries and simulation of data}
    The first column in \cref{fig:Geom} shows the pipe segment geometry. The radius of the pipe segment is \numprint{0.10} m and it consists of three \numprint{0.10} m long straight cylindrical sections connected by two curved sections that both turn 90 degrees to from an "S"-shaped geometry. The three straight sections each have eight symmetrically placed electrodes on them and the two curved sections both have four electrodes on their convex side. These electrodes are square-shaped with \numprint{0.01} m side length. 

    The second column in \cref{fig:Geom} shows the geometry of a pressure vessel. The diameter of the pressure vessel is 1 m and the length of the cylindrical middle section is \numprint{1.5} m. The radius of curvature for the spherical top section is \numprint{2.125} m. Furthermore, the chamber has 3 cylindrical extensions. One of the extensions is attached to the top section of the chamber. The radius of this extension is \numprint{0.30} m. The other two extensions are attached on the cylindrical section. The radius of the larger horizontal extension is \numprint{0.25} m and the radius of the smaller diagonal extension is \numprint{0.20} m. On each extension, 8 electrodes are placed radially. Furthermore, the cylindrical section of the chamber has four layers of radially placed electrodes. The topmost and bottommost layers have 14 electrodes each, and the two layers in between have 7 and 6 electrodes. 
    The total number of electrodes is 65. The inner electrodes on the main chamber are square-shaped with side length of \numprint{0.05} m. The other are rectangular with side lengths of \numprint{0.05} m and \numprint{0.025} m.

    The FE mesh that we use in the data simulation for the pipe segment geometry has 92578 nodes and 184057 elements, and the FE mesh for the pressure vessel has 491679 nodes and 980160 elements. In each simulation, we initially set the surface conductivity to $\sigma(x) = 1$ S and use it to generate the reference measurements (stage 0). Subsequently, we generate measurements from 4 stages of varying conductivities, each stage being a continuation of the previous one (stages 1-4). When simulating cracks (Cases 1 and 2), stages 1-4 consists of spatially narrow areas of low conductivity, $\sigma(x_\text{crack}) = 10^{-7}$ S (top rows in \cref{fig:Pipe,fig:Vessel}). When simulating the spatially smooth distribution (Case 3), the minimum conductivity is set to  0.89 S in a single point on the curved surface, and it gradually increases to background value 1 S as function of space. To mimic the diffusive process, the size of the area with lowered conductivity is increased between consecutive stages from 1 to 4 (top row in \cref{fig:PipeSmooth}).

\subsection{Image reconstruction}
    \label{ssec:ir}
    We reconstruct the conductivity by solving the minimization problem of \eqref{eq:minJ}. 
    In the crack detection problems in Cases 1 and 2, we utilize total variation (TV) regularization \cite{rudin1992nonlinear}
    \[
        F(\sigma) = TV(\sigma).
    \]
    TV regularization penalizes the magnitude of the spatial gradient of $\sigma$ in $L^1$ norm and is often suitable for cases where the conductivity features sharp edges on relatively homogeneous background. TV regularization is shown to be feasible in ERT based crack detection \cite{Hallaji2014skin}.
    
    In Case 3, we utilize Gaussian smoothness regularization
    \[
        F(\sigma) = \norm{R_\Gamma (\sigma - \sigma_\text{ref}) }^2,
    \]
    where $R_\Gamma$ is given by $R_\Gamma= \Gamma^{-1/2}$,  
    $
        \Gamma_{i,j} = ae^{- \frac{\norm{x^i-x^j}^2}{2b^2}}
    $ \cite{lipponen2013electrical}, $x^i, \;x^j \in \R^3$ are the locations of the nodes $i$ and $j$ in the FE mesh, $a = 100$ and $b = 0.075$. This is often a feasible choice of  regularization functional in cases of diffusive phenomena, because it promotes spatial smoothness of the conductivity distribution.
    
    In all the studies, the matrix $L$ is diagonal with $[L]_{i,i} = 1000$ and the minimum conductivity is $\sigma_\text{min} = 10^{-4}$ S. In addition, we compute a homogeneous estimate $\sigma_\text{ref}$ using the measurements $I^M_\text{ref}$ at the reference stage (stage 0). We use this estimate to compute the approximation error term $\epsilon = I^M_\text{ref} - I(\sigma_\text{ref})$ as described in Section 2.3. In Cases 1 and 2, we also use the homogeneous estimate as the maximum constraint $\sigma_\text{max} = \sigma_\text{ref}$, which encompasses the idea that the cracks can never increase the conductivity of the conductive layer \citep{Hallaji2014skin}. In Case 3, we set $\sigma_\mathrm{max} = \infty$, that is, the conductivity distribution is not constrained from above. The meshes used in the image reconstruction are sparser than those used when simulating the data. For example, the mesh for the pipe segment has 10358 nodes and 20389 elements while the mesh for the pressure vessel mesh has 20565 nodes and 40532 elements. 

    To solve the minimzation problem \eqref{eq:minJ}, we utilize the recently published iterative Relaxed Inexact Gauss-Newton (RIPGN) algorithm \cite{ripgn}. RIPGN is a Gauss-Newton variant; it linearizes the non-linear operator $I(\sigma)$ of \eqref{eq:minJ} at each iterate, finds an approximate solution to the associated proximal problem using primal dual proximal splitting (the algorithm of Chambolle and Pock \cite{chambolle2010first}), and interpolates between this solution and the one computed at the previous iteration step. After computing each iterate, we check the convergence of the algorithm by comparing the value of the objective function in \eqref{eq:minJ} at the current iterate to the value objective function at the previous iterates. Furthermore, we limit the maximum amount of computed iterations to 30. 
    
    The reason for applying the RIPGN method to optimization in this paper is that it was shown to shown to be very effective both in 3D and planar 2D ERT \citep{ripgn}. We note, however, that standard Gauss-Newton and Newton methods based on smoothing the minimum and maximum constraints and the TV functional \citep{Hallaji2014skin,gonzalez2017isotropic} could be utilized as well. All the code used in the study was written in Julia (1.3.1). Computations were done on AMD Ryzen 9 3950X CPU with 64 GB of RAM (DDR4, 3800 MHz, CL15). Parts of the RIPGN algorithm utilize CUDA code. CUDA code was run on Nvidia RTX 2080 Ti GPU. 
\subsection{Results and discussion}
    \subsubsection{Case 1: Crack detection in pipeline}
    \captionsetup[subfloat]{position=top,labelformat=empty}
\begin{figure*}[!t]

%
\centering%
\begin{minipage}{0.9\textwidth}%
    \centering%
        \rotatebox{90}{\hspace{-6.2cm} Reconstruction \hspace{1.2cm} Simulation}
        \subcstageb{Stage 1.}{sfig:c1s1}{0.18\linewidth}{"Figures/Pipe/Pipe2_w1s".png}{"Figures/Pipe/Pipe2_w1".png}{150}{0}{150}{100}%
        \subcstageb{Stage 2.}{sfig:c1s2}{0.18\linewidth}{"Figures/Pipe/Pipe3_w2s".png}{"Figures/Pipe/Pipe3_w2".png}{150}{200}{200}{0}%
        \subcstageb{Stage 3.}{sfig:c1s3}{0.18\linewidth}{"Figures/Pipe/Pipe4_w1s".png}{"Figures/Pipe/Pipe4_w1".png}{150}{0}{150}{100}%
        \subcstageb{Stage 4.}{sfig:c1s4}{0.18\linewidth}{"Figures/Pipe/Pipe5_w2s".png}{"Figures/Pipe/Pipe5_w2".png}{150}{200}{200}{0}%
    \end{minipage}%
    \subfloatcolorbar{0.0cm}{-2.8cm}{0.030\textwidth}{"Figures/Pipe/Pipe2_cscale".png}%
    \caption{Case 1: True conductivity distribution of the sensing sensing skin (top row) and the ERT-based reconstructions of the conductivity (bottom row) corresponding to four stages of cracking.}%
    \label{fig:Pipe}%
\end{figure*}%
\captionsetup[subfloat]{position=bottom}
        The results of Case 1 are illustrated in \cref{fig:Pipe}. The  top row shows the (true) simulated conductivity, and the reconstructed conductivity is depicted in the bottom row. Each column corresponds to a different cracking stage. 

        In the first stage (\cref{fig:Pipe}, column 1), a crack forms at the middle section of the pipe segment. The reconstruction captures the shape of this crack quite accurately and only a small artifact is visible near the crack. The conductivity value at the crack is $10^{-4}$ S, which equals to $\sigma_\text{min}$. %
        
        In the second stage (\cref{fig:Pipe}, column 2), two new cracks appear in the pipe segment, on the side opposite to the crack in stage 1. The reconstruction shows these cracks clearly: The locations and lengths of the cracks are somewhat correct. The orientation of the upper crack is slightly biased, but this bias is insignificant from practical point of view. 
        
        In the third stage (\cref{fig:Pipe}, column 3) the first crack (state 1) is lengthened upwards and further extended to two branches, forming a "Y"-shaped crack. The reconstructed surface conductivity traces the "Y"-shape of the crack well. The junction of the branches is slightly dislocated, but the size of the crack is again well recovered. In the final stage (\cref{fig:Pipe}, column 4) the two small cracks of stage 2 are inter-connected, forming a single crack extending from top to the mid section of the pipe segment. Again, the crack is well tracked by the ERT reconstruction, yet a couple of very small defects appear next to it. Note that the cracks in the reconstructed conductivity are thicker than the simulated ones since the inversion mesh is sparser.

    \subsubsection{Case 2: Crack detection in pressure vessel}
    \captionsetup[subfloat]{position=top,labelformat=empty}
\begin{figure*}[!t]

%
\centering%
\begin{minipage}{0.90\textwidth}%
    \centering%
        \rotatebox{90}{\hspace{-7.1cm} Reconstruction \hspace{1.9cm} Simulation}
        \subcstageb{Stage 1.}{sfig:c2s1}{0.18\linewidth}{"Figures/Chamber/Chamber2_w1s".png}{"Figures/Chamber/Chamber2_w1".png}{200}{50}{200}{200}%
        \subcstageb{Stage 2.}{sfig:c2s2}{0.18\linewidth}{"Figures/Chamber/Chamber3_w2s".png}{"Figures/Chamber/Chamber3_w2".png}{200}{50}{200}{200}%
        \subcstageb{Stage 3.}{sfig:c2s3}{0.18\linewidth}{"Figures/Chamber/Chamber4_w2s".png}{"Figures/Chamber/Chamber4_w2".png}{200}{50}{200}{200}%
        \subcstageb{Stage 4.}{sfig:c2s4}{0.18\linewidth}{"Figures/Chamber/Chamber5_w1s".png}{"Figures/Chamber/Chamber5_w1".png}{200}{50}{200}{200}%
    \end{minipage}%
    \subfloatcolorbar{0.0cm}{-2.8cm}{0.030\textwidth}{"Figures/Chamber/Chamber2_cscale".png}%
    \caption{Case 2: True conductivity distribution of the sensing sensing skin (top row) and the ERT-based reconstructions of the conductivity (bottom row) corresponding to four stages of cracking.}%
    \label{fig:Vessel}%
\end{figure*}%
\captionsetup[subfloat]{position=bottom}

        \cref{fig:Vessel} shows the simulated and reconstructed conductivity in each cracking stage in Case 2 where the geometry corresponds to a part of a pressure vessel.

        The reconstructions in Case 2 trace the evolution of the crack pattern well. In all stages of cracking, the reconstruction quality is similar to that in Case 1, although a few more deficiencies are present. This small reduction in quality compared to Case 1 is, however, expected. The surface area of the pressure vessel is thirteen times larger than the surface area of the pipe segment in Case 1 and the geometry is far more complex. Overall, the results of Case 2 further confirm the feasibility of the non-planar 2D ERT to crack detection applications.
        \subsubsection{Case 3: Imaging of diffusive phenomena on surface}
        \cref{fig:PipeSmooth} shows the true conductivity and the reconstruction on each stage in Case 3. 
        In first stage (\cref{fig:PipeSmooth}, column 1), a spatially smooth region of low conductivity appears at the middle section of the pipe segment. In the subsequent stages (\cref{fig:PipeSmooth}, columns 2-4), the surface area of this region increases and the value within the region decreases further. Each reconstruction reflects the corresponding stage clearly and the deficiencies in these reconstructions are apparent only at the last two stages. These deficiencies, however, look similar to what is observed in 3D and planar 2D ERT studies \citep{lipponen2013electrical}, and seem to be related to the type of regularization that is used. The simulation clearly demonstrates that ERT imaging of diffusive phenomena is achievable also in non-planar geometry.
        In first stage (the fist column in \cref{fig:PipeSmooth}), a spatially smooth region of low conductivity appears at the middle section of the pipe segment. In the subsequent stages (columns 2-3 in \cref{fig:PipeSmooth}), the surface area of this region increases and the value within the region decreases further. Each reconstruction reflects the corresponding stage clearly and the deficiencies in these reconstructions are apparent only at the last two stages. These deficiencies, however, look similar to what is observed in 3D and planar 2D ERT studies \cite{lipponen2013electrical}, and seem to be related to the type of regularization that is used. The simulation clearly demonstrates that ERT imaging of diffusive phenomena is achievable also in non-planar geometry.
        \captionsetup[subfloat]{position=top,labelformat=empty}
\begin{figure*}[t]

%
\centering%
\begin{minipage}{0.90\textwidth}%
    \centering%
        \rotatebox{90}{\hspace{-7.1cm} Reconstruction \hspace{1.9cm} Simulation}
        \subcstageb{Stage 1.}{sfig:c4s1}{0.18\linewidth}{"Figures/PipeSmooth/PipeSmooth2_w1s".png}{"Figures/PipeSmooth/PipeSmooth2_w1".png}{150}{0}{150}{100}%
        \subcstageb{Stage 2.}{sfig:c4s2}{0.18\linewidth}{"Figures/PipeSmooth/PipeSmooth3_w1s".png}{"Figures/PipeSmooth/PipeSmooth3_w1".png}{150}{0}{150}{100}%
        \subcstageb{Stage 3.}{sfig:c4s3}{0.18\linewidth}{"Figures/PipeSmooth/PipeSmooth4_w1s".png}{"Figures/PipeSmooth/PipeSmooth4_w1".png}{150}{0}{150}{100}%
        \subcstageb{Stage 4.}{sfig:c4s4}{0.18\linewidth}{"Figures/PipeSmooth/PipeSmooth5_w1s".png}{"Figures/PipeSmooth/PipeSmooth5_w1".png}{150}{0}{150}{100}%
    \end{minipage}%
    \subfloatcolorbar{0.0cm}{-2.8cm}{0.030\textwidth}{"Figures/PipeSmooth/PipeSmooth5_cscale".png}%
    \caption{Case 3: True conductivity distribution of the sensing sensing skin (top row) and the ERT-based reconstructions of the conductivity (bottom row) corresponding to four stages of cracking.}%
    \label{fig:PipeSmooth}%
\end{figure*}%
\captionsetup[subfloat]{position=bottom}

\section{Experimental study}
\subsection{Experimental setup and image reconstruction}
    For the experimental validation of the non-planar sensing skin, we used a setup where the outer surface of a hollow plastic cube was covered with conductive paint. We refer to the experimental test case as Case 4. The paint was a 1:10 mixture of graphite powder (manufactured by Cretacolor, \url{www.cretacolor.com}) and black coating paint (RUBBERcomp, manufactured by Maston, \url{www.maston.fi}). Side length of the cube was \numprint{0.2} m and bottom of the cube was open (last column in \cref{fig:Geom}). Each side of the cube had eight electrodes. On the vertical sides, five of these electrodes were inner electrodes, and the remaining three were shared with the adjacent sides. On the top side, this configuration was four and four. In total, the number of electrodes was 32. The electrodes were square-shaped, and the side length of an electrode was \numprint{0.012} m. The electrodes shared by two cube sides were bent along the edges.

    We measured the reference data in the initial stage in which the sensing skin was intact. Subsequently, we simulated the cracking of the underlying structure by cutting the surface of the paint layer with a knife. We generated four different stages of cracking and carried out the ERT measurements corresponding to each of these stages. The same approach to "physically simulating" different stages of cracking has been used previously in cases on planar geometries, e.g, in \citep{Hallaji2014skin,citation:EITSens3}. Based on these studies, the quality of ERT reconstructions is similar in cases where a sensing skin is damaged with knife and where real crack patterns of the same complexity are monitored on the surface of a, e.g., a concrete beam.

    We measured the data with an ERT device manufactured by Rocsole Ltd. (\url{www.rocsole.com}). This ERT device samples the currents with $1$ MHz frequency, and computes the current amplitudes from the samples using discrete Fourier transform. The device outputs the amplitudes for the excitation potentials and for the measured electric currents. The device selects the amplitude for the excitation potentials automatically. Furthermore, we used the $39$ kHz excitation frequency, and to reduce the measurement noise, the current amplitudes that we used in the reconstructions were one-minute time averages. 

    Similarly to Cases 1 and 2, we use TV regularization to for the crack reconstructions (see \cref{ssec:ir}). Furthermore, the parameters are the same in the numerical cases. \cref{sfig:cgeo} shows the FE mesh used in the inversion. This mesh has 12278 nodes and 23965 elements.
\subsection{Results and discussion of the experimental study}
\captionsetup[subfloat]{position=top,labelformat=empty}
\begin{figure*}[!t]


\centering%
    \begin{minipage}{0.90\textwidth}%
    \centering%
    \rotatebox{90}{\hspace{-5.3cm} Reconstruction \hspace{1.0cm} Photo}
        \subcstagec{Stage 1.}{sfig:c3s1}{0.18\linewidth}{"Figures/Photos/Cube3".jpg}{"Figures/Cube/Cube3_w2".png}{100}{40}{100}{150}%
        \subcstagec{Stage 2.}{sfig:c3s2}{0.18\linewidth}{"Figures/Photos/Cube4".jpg}{"Figures/Cube/Cube4_w2".png}{100}{40}{100}{150}%
        \subcstagec{Stage 3.}{sfig:c3s3}{0.18\linewidth}{"Figures/Photos/Cube5".jpg}{"Figures/Cube/Cube5_w1".png}{100}{40}{100}{150}%
        \subcstagec{Stage 4.}{sfig:c3s4}{0.18\linewidth}{"Figures/Photos/Cube6".jpg}{"Figures/Cube/Cube6_w1".png}{100}{40}{100}{150}%
    \end{minipage}%
    \subfloatcolorbar{0.0cm}{-2.4cm}{0.026\textwidth}{"Figures/Cube/Cube2_cscale".png}%
    \caption{Case 4: Photographs of the sensing skin applied on the cubic object in the experimental study (top row) and the respective ERT-reconstructions (bottom row). The photos and reconstructions correspond to four stages of cracking; in the photographs, the cracks at each stage are highlighted with light teal color and the cracks if the previous stages are darkened.}%
    \label{fig:Cube}%
\end{figure*}%
\captionsetup[subfloat]{position=bottom,labelformat=simple}
    The top row in \cref{fig:Cube} shows a photo of the sensing skin at each stage and the bottom row shows the corresponding reconstructions (Case 4). We highlight the crack made at each stage with light teal color and the cracks made at the previous stages are darkened; the cracks are very thin (less than 1 mm in thickness) and would otherwise be indistinguishable from the background. 

    In the first stage (\cref{fig:Cube}, column 1), we created a diagonal crack on one the vertical sides of the cube.
    Reconstruction shows this crack accurately, although a small gap is visible in the reconstruction; the actual crack is fully connected. In the second stage (\cref{fig:Cube}, column 2), we extended the first crack so that it reaches the top side of the cube. The reconstruction shows the location and size of this crack quite accurately, although the curved extension of the crack is wider than the initial crack at stage 1.
    
    In the third stage (\cref{fig:Cube}, column 3), we created an additional crack on the adjacent side of the cube. This crack is clearly visible in the reconstruction. In the final stage (\cref{fig:Cube}, column 4), we extended the crack made on the third stage so that it reaches through the top side to the adjacent side. This extended crack is correctly located by ERT, although the reconstruction shows a blocky area in the corner of the cube. This reconstruction artifact is an expected one, since the electrodes are quite far from the cube corners, and therefore the ERT measurements are less sensitive to conductivity variations in these areas. Note also that the cracks in the reconstructed conductivities are thicker than the actual cracks made on the physical sensing skin. This is, again, partly caused by the sparsity of the finite element mesh, and partly a result of limited sensitivity of ERT to thickness of the cracks \citep{Hallaji2014skin}.

\section{Conclusions}
    One goal of the structural health monitoring research is to develop cost-effective sensor technologies. ERT based sensing skins have been proposed as a cost-effective distributed surface sensing systems for SHM. In the previous studies, the sensing skins sensors have been planar. To extend the usability of the ERT-based sensing skin to more complex structures it is necessary consider non-planar sensing skins and computational models.

    In this paper, we formulated the computational model for ERT in the case of non-planar surface sensing. We gave a brief outline of the numerical scheme to reconstruct the non-planar surface conductivity of the sensing skin. In this scheme, we modelled the relationship between the measured electric currents and the known electric potentials on a surface of an arbitrary object in 3D, and we used this model to formulate a minimization problem that yields the conductivity as the solution. Furthermore, we studied the feasibility of the scheme with three sets of numerical simulations and one set of experimental data.

    In the synthetic cases, we acquired highly accurate reconstructions, and we observed only minor artifacts in the reconstructed conductivity. These artifacts were similar to what has been observed in previous planar sensing skins studies. With the measurement data, the reconstruction quality was slightly worse than in synthetic cases but sufficient for most practical applications. Furthermore, we noted that the reconstructions from the measurement data could be improved, for example, by using a model for inhomogeneous background conductivity or by using a different electrode arrangement. 

    Overall, the reformulation of ERT imaging problem by using non-planar surface model proved to be viable; we did not observe any loss of reconstruction quality that could be related to non-planarity of the sensing skins. We conclude that with the proposed approach, ERT-based sensing skin is viable in monitoring complicated non-planar surfaces. In the future, non-planar sensing skins should allow monitoring of complex industrial structures such as those in aerospace, civil and mechanical engineering.



\appendix

\section{Implementation details}
\begin{figure}[!tp]%
    \centering%
    \includegraphics[width=0.33\textwidth]{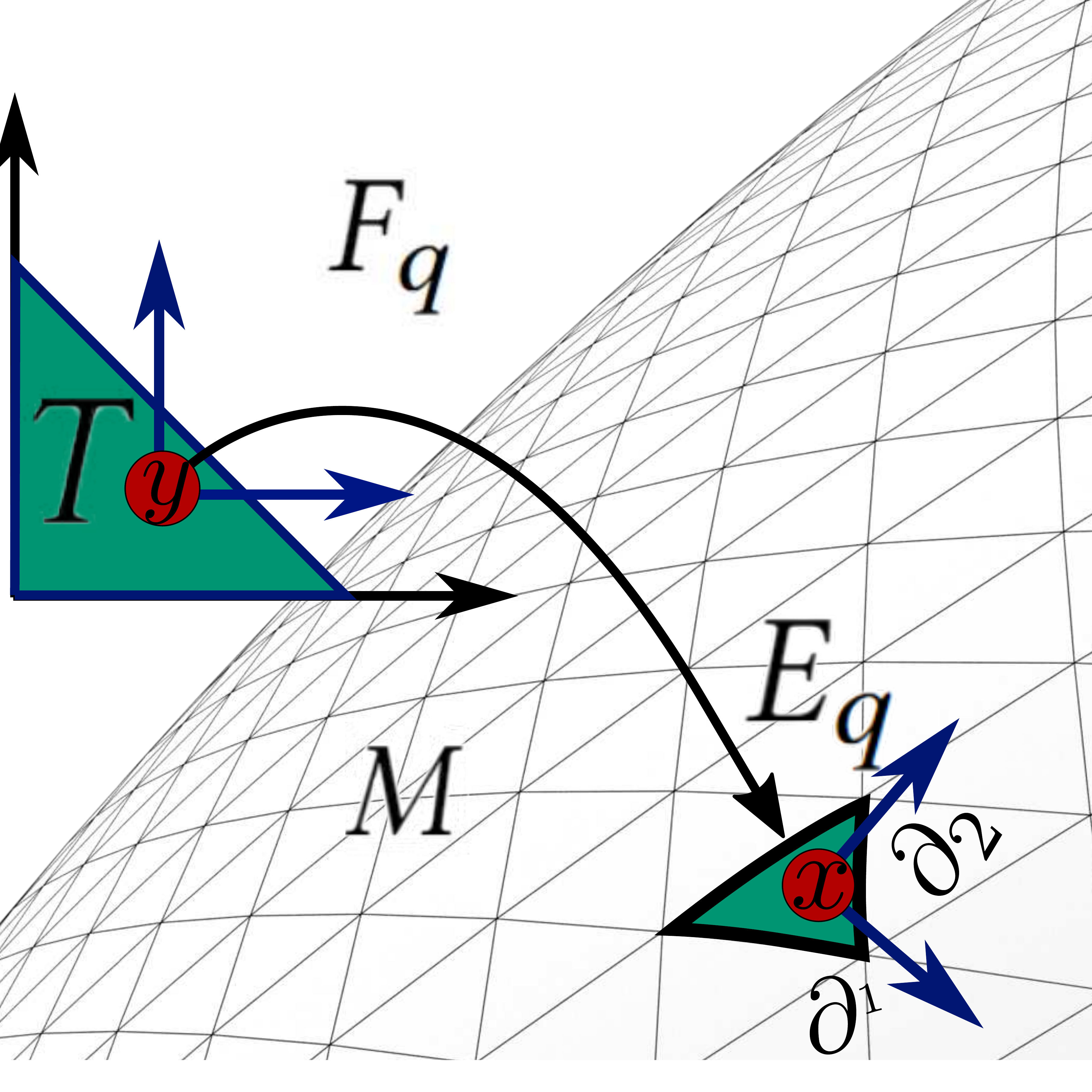}%
    \caption{Manifold $M$ is described locally on $T$ by diffeomorphism $F_q$. At point $x$, the tangents $\partial_1$ and $\partial_2$ set the basis for the tangent plane $T_xM$.}%
    \label{fig:Drawing2}%
\end{figure}%

\label{app:impdet}
Similarly to Euclidean spaces \cite{VossThesis2020}, we derive a finite element (FE) approximation for \eqref{eq:CEM}. Although \eqref{eq:CEM} looks identical to the Euclidean counterpart, the definitions of the operators in \eqref{eq:CEM} are more involved, containing calculations based on the Riemannian metric $g$. 

The FE approximation relies on the weak formulation of \eqref{eq:CEM}. The well-posedness of this weak formulation has been previously shown for $(u^p,U^p)$ (i.e. the potential measurement setup) \cite{somersalo1992existence} in Euclidean spaces, however for $(u^p,I^p)$ (i.e. the current measurement setup), no previous work exists; we will show the well-posedness of the weak formulation for $(u^p,I^p)$ in the manifold setting, which also extends to the Euclidean setting.

Initially, we take $g$ as an arbitrary metric on $M$. However, to see how to compute the FE approximation through integration in $\R^2$, we need to fix $g$. In this case, to properly account for the shape of $M$ in $\R^3$, we take $g$ as the metric induced on $M$ by the natural metric on $\R^3$ \cite{lee2018introduction,leeintroduction}. Namely, for tangents $w_1, w_2$ on the tangent plane $T_x M$ at a point $x$ (illustrated in \cref{fig:Drawing2}), it is defined by $g(w_1, w_2) \defeq \tilde g (d\phi(w_1), d\phi(w_2))$, where $\phi: M \to \R^3$ is the inclusion map $\phi(x) \defeq x$ and $\tilde g = (dx^1)^2 + (dx^2)^2 + (dx^3)^2$ is the Euclidean metric in $\R^3$.

The solutions $(u^p,I^p)$ of \eqref{eq:CEM} comprise a twice continuously differentiable function $u^p \in C^2 \defeq C^2(M)$ and a vector $I^p \in \R^L$ with components $I^p_k$, $k=1,\dots, L$. We denote $(u^p,I^p) \in C^2 \defeq C^2(M) \oplus \R^L$. We will show that the finite element approximation of \eqref{eq:CEM}, however, satisfies the weak formulation,
\begin{equation}\label{eq:bilin}
    B((u^p,I^p),(v,V)) = L((v,V)), \; \forall (v,V) \in H,
\end{equation}
where $B$ is bilinear and $L$ is linear. The space
\[
    H\defeq H^1(M) \oplus \R^L,
\]
where $H^1(M)$ is a Hilbert space of twice weakly differentiable functions.
We define it as the completion of $C^\infty(M)$ with respect to the norm $\norm{\cdot }_{H^1(M)}$ \cite[Chapter 10]{hebey2000non}.
It corresponds to the common space $H^1(\Omega)$ also used with planar CEM \cite{somersalo1992existence}.
The natural norm for this space is \cite{somersalo1992existence, hebey1996sobolev,hebey2000non} 
\begin{equation}\label{eq:Hnorm}
    \norm{(v,V)}^2_H=  \norm{v }_{H^1(M)}^2 + \norm{V}_{\R^L}^2,   
\end{equation}
where the inner products inducing the individual norms are 
\[
    \langle u, v\rangle_{H^1(M)} = \int_M uv dS + \int_M \dotp{\nabla u}{\nabla v}_g dS   
    \quad\text{and}\quad
    \langle U, V\rangle_{\R^L} = \sum_{k=1}^L U_k V_k.  
\]
In the following lemmas, we assume that the model \eqref{eq:CEM} has at least two electrodes, i.e. $L \ge 2$.

\begin{lemma}
    \label{lemmaweak}
    Suppose that $\zeta_k > 0$ is constant on $\partial M_{e_k}\;\forall k$, the part of $\partial M$ corresponding to electrode $k$. Then the PDE \eqref{eq:CEM} admits a weak formulation \eqref{eq:bilin}, where the bilinear operator $B: H \times H \to \R$ and the linear operator $L: H \to \R$ are given by
    \[        
        \begin{aligned}
        B((u^p,I^p),(v,V)) &= \int_{M} \sigma \dotp{\nabla v}{ \nabla u^p}_g dS + \sum_k^L \tfrac{1}{\zeta_k}\int_{\partial M_{e_k}} u^pv d\tilde S- \sum_k^L \tfrac{1}{\zeta_k}\int_{\partial M_{e_k}} u^pV_k d\tilde S + \sum_k^L I^p_kV_k 
    \end{aligned}
    \]
    and
    \[
        L(v,V) = \sum_k^L \tfrac{1}{\zeta_k} \int_{\partial M_{e_k}} U_k(v-V_k) d\tilde S.
    \]
\end{lemma}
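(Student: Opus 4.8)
The plan is to derive \eqref{eq:bilin} from the strong form \eqref{eq:CEM} in the classical way: test the governing equation \eqref{eq:CEM1} against a function $v$, integrate over $M$, move one derivative onto $v$ by integration by parts, and then eliminate the unknown boundary flux $\sigma\dotp{\nabla u^p}{\hat n}$ using the remaining boundary conditions \eqref{eq:CEM2}--\eqref{eq:CEM4}. Since a solution is assumed to be $C^2$ whereas the test pair lies in $H = H^1(M)\oplus\R^L$, I would first carry out the computation for $v\in C^\infty(M)$ and then pass to general $v\in H^1(M)$ by density, using that $H^1(M)$ is by definition the $\norm{\cdot}_{H^1(M)}$-completion of $C^\infty(M)$ and that every term appearing in $B$ and $L$ is continuous on $H$ (the boundary integrals via the trace $H^1(M)\to L^2(\partial M)$). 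The component $V\in\R^L$ is finite dimensional, so no approximation is needed there.

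Concretely, first I would multiply \eqref{eq:CEM1} by $v$, integrate, and apply Green's first identity on the Riemannian manifold $(M,g)$,
\[
  \int_M v\,\nabla\cdot(\sigma\nabla u^p)\,dS
   = \int_{\partial M} v\,\sigma\dotp{\nabla u^p}{\hat n}_{g_\omega}\,d\tilde S
     - \int_M \sigma\dotp{\nabla v}{\nabla u^p}_g\,dS,
\]
where $\hat n$ is the outward unit conormal of $\partial M$ inside $M$ and $d\tilde S$, $dS$ are the Riemannian boundary and surface measures. Because the left-hand side vanishes by \eqref{eq:CEM1}, this isolates the bulk term $\int_M\sigma\dotp{\nabla v}{\nabla u^p}_g\,dS$. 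Next I would split the boundary integral into the electrode pieces $\partial M_{e_k}$ and the remainder; on the latter the integrand vanishes by the insulation condition \eqref{eq:CEM4}, leaving only the sum over electrodes. On each $\partial M_{e_k}$ I would substitute $\sigma\dotp{\nabla u^p}{\hat n} = (U^p_k-u^p)/\zeta_k$ from \eqref{eq:CEM2}; collecting terms yields exactly the part of $B((u^p,I^p),(v,V))=L(v,V)$ that pairs against $v$.

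To recover the terms pairing against $V$, I would use the current condition \eqref{eq:CEM3}: multiply $\int_{\partial M_{e_k}}\sigma\dotp{\nabla u^p}{\hat n}\,d\tilde S = -I^p_k$ by $V_k$, sum over $k$, and again replace the flux through \eqref{eq:CEM2}. This produces the term $\sum_k I^p_k V_k$ together with the boundary sum $-\sum_k\tfrac{1}{\zeta_k}\int_{\partial M_{e_k}}u^pV_k\,d\tilde S$ on the left and $-\sum_k\tfrac{1}{\zeta_k}\int_{\partial M_{e_k}}U^p_kV_k\,d\tilde S$ on the right. Adding this identity to the one obtained from the $v$-test gives precisely $B((u^p,I^p),(v,V))=L(v,V)$, since the two right-hand sides combine into $\sum_k\tfrac{1}{\zeta_k}\int_{\partial M_{e_k}}U_k(v-V_k)\,d\tilde S$; bilinearity of $B$ and linearity of $L$ are then immediate from their explicit forms.

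The main obstacle I anticipate is geometric rather than algebraic: justifying Green's identity in the manifold setting with the correct bookkeeping of normals and measures. One must verify that the divergence theorem for $(M,g)$ (as in the cited smooth-manifold references) applies to the embedded surface carrying the pulled-back metric, that the relevant normal is the conormal $\hat n$ tangent to $M$ used in \eqref{eq:CEM2}--\eqref{eq:CEM4} rather than the ambient $\R^3$ normal, and that $d\tilde S$ is genuinely the $(\partial M,g_\omega)$ volume measure, so that the boundary flux term matches \eqref{eq:CEM3}. Once this identity and the trace estimate are in place, the remaining steps are routine substitution and rearrangement.
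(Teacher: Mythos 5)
Your derivation follows essentially the same route as the paper's proof: test \eqref{eq:CEM1} against $v$, integrate by parts via the Riemannian divergence theorem, substitute the flux from \eqref{eq:CEM2}, pair \eqref{eq:CEM3} against $V_k$ (using that $V_k$ is constant on $\partial M_{e_k}$), and combine the two identities into $B((u^p,I^p),(v,V))=L(v,V)$. Your version is, if anything, slightly more careful than the paper's on two side points it leaves implicit --- invoking \eqref{eq:CEM4} explicitly to discard the off-electrode boundary integral, and passing from smooth to $H^1$ test functions by density and the trace embedding --- so no gap remains.
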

\begin{proof}

    Suppose that $(u^p,I^p)$ solves \eqref{eq:CEM}. We need to show that it solves \eqref{eq:bilin}. So let $(v,V)\in H$ be arbitrary. We define $X\defeq\sigma \nabla u^p$. Applying $\int_{M} \cdot vdS$ to \eqref{eq:CEM1} we get
    \begin{equation}   
        \begin{aligned}
            -\int_{M} v \nabla \cdot (\sigma \nabla u^p) dS  &= -\int_{M} v \nabla \cdot X dS=0,
        \end{aligned}
    \end{equation}
    where $dS$ is the Riemannian volume corresponding to the metric $g$ on $M$.
    Denoting by $d\tilde S$ the Riemannian volume on $\partial M$, using the product rule, the divergence theorem on Riemannian manifolds \cite[Appendix A]{wang2012geometric}, and \eqref{eq:CEM2} to replace $\sigma \dotp{\nabla u^p}{\hat n}_{g_\omega}$, we obtain
    \begin{equation}\label{eq:cemd1}   
        \begin{aligned}[t]
            0=\int_{M}  v \nabla \cdot X dS  & = \int_{M} \dotp{\nabla v}{X}_g dS - \int_{\partial \xset} v \dotp{X}{\hat n}_{g_\omega} \: d\tilde S\\
            &= \int_{M} \dotp{\nabla v}{\sigma \nabla u^p}_g dS - \int_{\partial \xset} v \dotp{\sigma \nabla u^p}{\hat n}_{g_\omega}  \: d\tilde S\\
            &= \int_{M} \sigma \dotp{\nabla v}{ \nabla u^p}_g dS - \sum_k^L \int_{\partial M_{e_k}} v \sigma \dotp{\nabla u^p}{\hat n}_{g_\omega}  \: d\tilde S\\
            &= \int_{M} \sigma \dotp{\nabla v}{ \nabla u^p}_g dS - \sum_k^L \int_{\partial M_{e_k}} v (U^p_k-u^p)/\zeta_k \: d\tilde S.
        \end{aligned}
    \end{equation}
    The equations \eqref{eq:CEM2} and \eqref{eq:CEM3} both hold for each $k=1,\dots, L$ and define the vectors $U^p,I^p \in \R^L$. By multiplying each component $U^p_k$ of $U^p$ by $V_k/\zeta_k$, where  $V_k$ is a component of a test vector $V\in \R^L$, integrating over $\partial M_{e_k}$, and summing over $k=1,\ldots,L$, we get
    \begin{equation}\label{eq:cemd2}
        \begin{aligned}[t]
            \sum_k^L \int_{\partial M_{e_k}} u^p(x)V_k/\zeta_{k}\tilde{V} + \int_{\partial M_{e_k}}\sigma \dotp{\nabla u^p(x)}{\hat n}_{g_\omega}V_k\tilde{V} - \int_{\partial M_{e_k}} U^p_{k}V_k/\zeta_{k}d\tilde S = 0. 
        \end{aligned}
    \end{equation}
    Since $V_k$ is constant on $\partial M_{e_k}$,
    $       
        \int_{\partial M_{e_k}} \sigma \dotp{\nabla u^p(x)}{\hat n}_{g_\omega}V_k d\tilde S =  \int_{\partial M_{e_k}} \sigma \dotp{\nabla u^p(x)}{\hat n}_{g_\omega} d\tilde SV_k= -I^p_kV_k.
        $
    Subtracting \eqref{eq:cemd2} from \eqref{eq:cemd1} and plugging in $I^p_k$ gives
    \begin{equation*}
        \begin{aligned}[t]
            0-0&= \int_{M} \sigma \dotp{\nabla v}{ \nabla u^p}_g dS - \sum_k^L \int_{\partial M_{e_k}} v (U^p_k-u^p)/\zeta_k \: d\tilde S\\ 
            &-\left( \sum_k^L \int_{\partial M_{e_k}} u^p(x)V_k/\zeta_{k} - U^p_{k}V_k/\zeta_{k} d\tilde S  -  I^p_kV_k  \right)\\
            &= \int_{M} \sigma \dotp{\nabla v}{ \nabla u^p}_g dS + \sum_k^L \left(\int_{\partial M_{e_k}} u^p(v - V_k)/\zeta^k + U_k(V_k-v)/\zeta^k  d\tilde S + I^p_kV_k\right).
        \end{aligned}
    \end{equation*}
    Finally, since assume $\zeta_k$ is constant, by subtracting $\sum_k^L \int_{\partial M_{e_k}}  U_k(V_k-v)/\zeta^k d\tilde S$ we get \eqref{eq:bilin}.
\end{proof}

 The next lemma shows that the weak formulation \eqref{eq:bilin} is well-posed, meaning that the solution $(u^p,I^p)$ exists and is unique, and $B$ is continuous, leading eventually to the invertibility of the linear system of the FE approximation. For the simplicity, we assume that the boundary $\partial M$ of $M$ is $C^\infty$. However, the arguments that we use in the following proofs should extend to domains with boundaries of lesser smoothness.

Now, if we were solving for $(u^p,U^p)$ instead of $(u^p,I^p)$, we could follow the treatment in \cite{somersalo1992existence} by replacing relevant theorems on Sobolev spaces by their (compact Riemannian) manifold counterparts. However, no well-posedness proof for the weak formulation of $(u^p,I^p)$ exists. To prove the well-posedness for $(u^p,I^p)$, we show that the conditions of the Banach-Nes\v{c}a-Babu\v{s}ka theorem (BNB) hold for $B$ and that $B$ is continuous. The Euclidean case will follow as long as the domain for $u^p(x)$ is bounded.

\begin{lemma}\label{lemmawelp}
    Suppose that $0 < \sigma_m \le \sigma(x) \le \sigma_M < \infty$ is integrable on compact connected Riemannian manifold $(M,g)$ and with a $C^\infty$ boundary $\partial M$. Then \eqref{eq:bilin} is well-posed.
\end{lemma}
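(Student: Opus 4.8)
The plan is to verify the hypotheses of the Banach--Ne\v{c}as--Babu\v{s}ka (BNB) theorem for the bilinear form $B$ and the functional $L$ of \cref{lemmaweak} on the Hilbert space $H = H^1(M)\oplus\R^L$. Since $B$ is neither symmetric nor coercive --- the electrode coupling terms $-\sum_k^L \tfrac{1}{\zeta_k}\int_{\partial M_{e_k}} u^pV_k\,d\tilde S$ and $\sum_k^L I^p_k V_k$ give it an indefinite, saddle-point-like block structure --- the classical Lax--Milgram lemma does not apply. Instead I would establish (i) continuity of $B$ and $L$, (ii) the inf--sup lower bound $\inf_{(u,I)}\sup_{(v,V)} B((u,I),(v,V))/(\norm{(u,I)}_H\norm{(v,V)}_H)\ge\beta>0$, and (iii) non-degeneracy of $B$ in the test argument, i.e. for every $(v,V)\neq 0$ there is some $(u,I)$ with $B((u,I),(v,V))\neq 0$.

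Continuity is routine. For the volume term I would bound $\int_M \sigma\dotp{\nabla u^p}{\nabla v}_g\,dS \le \sigma_M\norm{\nabla u^p}\norm{\nabla v}$ using the upper bound on $\sigma$ and Cauchy--Schwarz in $L^2(M;g)$; the integrals over $\partial M_{e_k}$ are controlled by the trace inequality $\norm{u^p}_{L^2(\partial M)}\le C\norm{u^p}_{H^1(M)}$, which holds on a compact manifold with $C^\infty$ boundary \cite{hebey2000non}; and $\sum_k^L I^p_kV_k\le\norm{I^p}_{\R^L}\norm{V}_{\R^L}$. Summing yields $\abs{B((u^p,I^p),(v,V))}\le C\norm{(u^p,I^p)}_H\norm{(v,V)}_H$, and the same trace bound gives continuity of $L$.

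The heart of the proof is the inf--sup condition, which I would obtain by an explicit \emph{symmetrizing} choice of test function. Given $(u^p,I^p)$, set $v\defeq u^p$ and $V_k\defeq I^p_k - \tfrac{1}{\zeta_k}\int_{\partial M_{e_k}} u^p\,d\tilde S$. The cross terms then cancel and
\[
\begin{aligned}
B((u^p,I^p),(v,V)) = {}& \int_M \sigma\abs{\nabla u^p}_g^2\,dS + \sum_k^L \tfrac{1}{\zeta_k}\int_{\partial M_{e_k}}(u^p)^2\,d\tilde S \\
&{}+ \sum_k^L\Bigl(I^p_k-\tfrac{1}{\zeta_k}\int_{\partial M_{e_k}}u^p\,d\tilde S\Bigr)^2,
\end{aligned}
\]
a sum of nonnegative terms. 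Using $\sigma\ge\sigma_m>0$ this dominates $\sigma_m\norm{\nabla u^p}^2$ together with the electrode $L^2$ energy of $u^p$ and the defect $\norm{I^p-(\cdots)}^2$; from the latter two one recovers $\norm{I^p}_{\R^L}^2$ by the triangle inequality and the trace bound. To upgrade $\norm{\nabla u^p}^2$ plus the electrode energy to the full $\norm{u^p}_{H^1(M)}^2$ I would invoke a Poincar\'e-type inequality $\norm{u^p}_{L^2(M)}^2 \le C\bigl(\norm{\nabla u^p}^2 + \sum_k^L\int_{\partial M_{e_k}}(u^p)^2\,d\tilde S\bigr)$. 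Since also $\norm{(v,V)}_H\le C\norm{(u^p,I^p)}_H$ (again by the trace bound), dividing gives the uniform lower bound $\beta$.

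The main obstacle is precisely this Poincar\'e-type inequality, which encodes that the electrode terms pin down the otherwise free additive constant. I would prove it by compactness and contradiction: if it failed there would be $u_n$ with $\norm{u_n}_{L^2(M)}=1$, $\norm{\nabla u_n}\to 0$ and $\sum_k^L\int_{\partial M_{e_k}}(u_n)^2\,d\tilde S\to 0$; by the Rellich--Kondrachov theorem on the compact manifold $(M,g)$ a subsequence converges in $H^1(M)$ to some $u$ with $\nabla u=0$, hence constant by connectedness and with $\norm{u}_{L^2(M)}=1$; but continuity of the trace forces $u\equiv 0$ on each $\partial M_{e_k}$, and since the electrodes have positive measure this contradicts $u$ being a nonzero constant. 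Condition (iii) is then immediate: given $(v,V)\neq 0$, take $(u,I)=(0,V)$ when $V\neq 0$ (giving $B=\norm{V}_{\R^L}^2$), and $(u,I)=(v,0)$ when $V=0$ but $v\neq 0$ (giving $\int_M\sigma\abs{\nabla v}_g^2\,dS + \sum_k^L\tfrac{1}{\zeta_k}\int_{\partial M_{e_k}}v^2\,d\tilde S$, positive by the same inequality). With (i)--(iii) established, the BNB theorem delivers existence, uniqueness, and the stability estimate, i.e. well-posedness of \eqref{eq:bilin}.
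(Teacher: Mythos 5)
Your proposal is correct and follows essentially the same route as the paper's proof: both verify the Banach--Ne\v{c}as--Babu\v{s}ka conditions on $H=H^1(M)\oplus\R^L$, prove the key Poincar\'e-type norm equivalence by a compactness/contradiction argument based on Rellich--Kondrachov, obtain the inf--sup bound with a symmetrizing test function built from $u^p$ and $I^p_k \mp \tfrac{1}{\zeta_k}\int_{\partial M_{e_k}} u^p\, d\tilde S$, and settle non-degeneracy with the identical choices $(v,0)$ and $(0,V)$. Your sign choice $V_k = I^p_k - \tfrac{1}{\zeta_k}\int_{\partial M_{e_k}} u^p\, d\tilde S$ is only a mild streamlining: it makes $B$ a sum of nonnegative squares outright, whereas the paper's choice $\hat V_k = W_k + \tfrac{1}{\zeta_k}\int_{\partial M_{e_k}} w\, d\tilde S$ leaves a negative term $-\sum_k\bigl(\tfrac{1}{\zeta_k}\int_{\partial M_{e_k}} w\, d\tilde S\bigr)^2$ that must then be absorbed via Cauchy--Schwarz and the scaling $\hat v = 2aw$.
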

\begin{proof}
    For the proof, to avoid confusion between variables and not carry the index $p$, we write $(w, W)$ in place of $(u^p,I^p)$.
    According to BNB \cite[Theorem A.4 (Appendix)]{hesthaven2016certified} (See also \cite[Theorem 1]{saito2017notes}), since $H$ is a reflexive Banach space \cite[Proposition 2.1]{hebey2008sobolev} (Note that $H$ also a Hilbert space \cite[Proposition 2.1]{hebey2000non}), there exists a unique solution $(w,W) \in H$ to the problem \eqref{eq:bilin} if
    \begin{subequations}
        \begin{alignat}{2}
            \label{eq:bnb1} &\sup\limits_{{(v,V)}\in H} \frac{B((w,W),(v,V))}{\norm{(v,V)}_H} \ge \beta\norm{(w,W)}_H\text{ for some } \beta > 0\text{ and if} \\
            \label{eq:bnb2} &(\forall (w,W)\in H,B((w,W),(v,V)) =0)\Rightarrow ((v,V) = 0).
        \end{alignat}
    \end{subequations}
    First, however, similarly to \cite{somersalo1992existence}, we will show that 
    \[
        \norm{(v,V)}^2_*\defeq \int_{M} \dotp{\nabla v}{ \nabla v}_g dS + \norm{v}^2_{\partial M_e} + \norm{V}^2_{\R^L},
    \]
    where $\norm{v}^2_{\partial M_e} \defeq \sum_k^L \int_{\partial M_{e_k}} v^2 d \tilde V$, is a norm equivalent to \eqref{eq:Hnorm}, i.e. there exists constants $\lambda, \Lambda >0$ such that
    \begin{equation}
        \label{eq:manifold-normequiv}
        \Lambda\norm{(v,V)}_* \ge  \norm{(v,V)}_H \ge \lambda\norm{(v,V)}_*
        \quad \forall (v, V) \in H.
    \end{equation}

    To see that the first inequality of \eqref{eq:manifold-normequiv} holds, by the continuous embedding $H^{1/2}(\partial M) \subset L^2(\partial M)$ \cite[Definition 1.4, Chapter 4]{taylor2011partial} for some $C_1, C_2 > 0$,
    \[
        \norm{v}^2_{\partial M_e} \le \norm{v}^2_{L^2(\partial M)} \le C_1\norm{v}^2_{H^{1/2}(\partial M)} \le C_2\norm{v}^2_{H^{1}( M)}.
    \]
    Since $\dotp{\nabla v}{ \nabla v}_g \ge 0$, we thus obtain for some $\Lambda > 0$ that
    $
        \norm{(v,V)}^2_* \le \Lambda^2 (\norm{v}^2_{H^{1}( M)} + \norm{V}_{\R^L}^2)= \Lambda^2 \norm{v}^2_{H}.
    $
    
    To verify the second inequality of \eqref{eq:manifold-normequiv}, assume that the claim is not true. Then we can take a sequence $\lbrace (v^n,V^n) \rbrace_{n=1}^\infty\in H$, so that $\norm{(v^n,V^n)}_H = 1$ and $\norm{(v^n,V^n)}_* < 1/n$. Now, according to the compact embedding theorem on Sobolev spaces on manifolds \cite[Proposition 4.4, Chapter 4]{taylor2011partial}, $v^n $ contains a converging subsequence $ v^{n_i} \to v \in L^2(M)$, ${n_i} > n_{i-1}$, and $v \in L^2(M)$. Since $\frac{1}{{n_i}} >\norm{(v^{n_i},V^{n_i})}_* $, we have that 
    \begin{equation}\label{eq:triineq}
        \frac{1}{n_i^2} > \int_M \dotp{\nabla v^{n_i}}{\nabla v^{n_i}}_g dS,\quad  \frac{1}{n_i} > \norm{  v^{n_i}}_{\partial M_{e}}, \quad\text{and}\quad \frac{1}{{n_i}} > \norm{V^{n_i}}_{\R^L}.
    \end{equation}
        
    The first inequality implies that $v^{n_i}$ forms a converging sequence in $H^1(M)$ which satisfies $\int_M \dotp{\nabla v^{n_i}}{\nabla v^{n_i}}_g dS \to 0$. Applying Poincar\'{e} \cite[Theorem 2.10]{hebey2000non} and Hölder inequalities\footnote{Follows directly from Young's inequality.} shows for constants $c_1 \in \R$ and $ c_2 > 0$ that
    \[
        \norm{v^{n_i}-c_1}_{L^1(M)}^2 \le c_2\norm{\nabla v^{n_i} }_{L^1(M)}^2 \le c_2 \norm{\nabla v^{n_i} }_{L^2(M)}^2 \norm{1}_{L^2(M)}^2 \to 0,
    \]
    meaning that $v^{n_i}$ converges to the constant $c_1$ in $L^1(M)$. Further, since $v^{n_i} \to v \in L^2(M)$, using Hölder's inequality again shows that $\norm{v^{n_i}-v}_{L^1(M)} \le \norm{v^{n_i}-v}_{L^2(M)}\norm{1}_{L^2(M)} \to 0$, meaning that $v^{n_i} $ also converges to the same $v $ in $L^1(M)$, confirming that indeed $v=c_1$, i.e. $v$ is constant almost everywhere. Now, since $v$ is a.e. constant and $v \vert_{\partial M} \in L^2(\partial M)$, the second inequality in \eqref{eq:triineq} implies that $ v^{n_i} \to 0 $, i.e. $c_1=0$. The final inequality in \eqref{eq:triineq} implies that $ V^{n_i} \to 0$, i.e. $V=0$. Since $v^{n_i} \to 0$ and $V^{n_i} \to 0$, $\norm{( v^{n_i}, V^{n_i})}_H \to 0$, which is a contradiction, since $\norm{( v^{n_i}, V^{n_i})}_H = 1$.

    To see that \eqref{eq:bnb1} holds, start by denoting $s_k  = \int_{\partial M_{e_k}} 1 d\tilde S $, $a\defeq \max \left\lbrace\zeta_1^{-1},\zeta_2^{-1},\dots, s_1^2\zeta_1^{-1},\right.$ $\left. s_2^2\zeta_2^{-1}, \dots \right\rbrace$, and $c\defeq a\min \left\lbrace 1/a,\sigma_m, \zeta_1^{-1}, \zeta_2^{-1},\dots \right\rbrace$. If $(w,W)=(0,0)$, then \eqref{eq:bnb1} clearly holds. If $(w,W)\neq (0,0)$, then pick a function $(\hat v,\hat V)\in H$ that satisfies $\hat v = 2aw$ and $\hat V_k=W_k + \tfrac{1}{\zeta_k}\int_{\partial M_{e_k}}  w d \tilde V$. 

    Plugging $(\tilde v,\tilde V)$ into \eqref{eq:bilin} and simplifying gives \allowdisplaybreaks

    \begin{align*}
        &B((w,W),(\hat v,\hat V))  = 2a\int_{M} \sigma \dotp{\nabla w}{ \nabla w}_g dS  + \sum_k^L \tfrac{2a}{\zeta_k}\int_{\partial M_{e_k}} w^2 d\tilde S 
        - \sum_k^L \Bigl(\tfrac{1}{\zeta_k}\int_{\partial M_{e_k}} w d\tilde S\Bigr)^2
        +\sum_k^L W_k^2 \\
        &\ge 2a\int_{M} \sigma \dotp{\nabla w}{ \nabla w}_g dS 
        + \sum_k^L \tfrac{2a}{\zeta_k}\int_{\partial M_{e_k}} w^2 d\tilde S 
        - \sum_k^L \tfrac{a}{\zeta_k}\int_{\partial M_{e_k}} w^2 d\tilde S
        +\sum_k^L W_k^2 \\
        &\ge c\Bigl(\int_{M}  \dotp{\nabla w}{ \nabla w}_g dS+ \sum_k^L\int_{\partial M_{e_k}} w^2 d\tilde S+ \sum_k^L W_k^2\Bigr) =c \norm{(w,W)}_*^2
    \end{align*}
    Denoting $b\defeq 2\max \left\lbrace 1,2a^2, \vert e_1\vert^2\zeta_1^{-2},\vert e_2\vert^2\zeta_2^{-2},\dots \right\rbrace$,
    \[
        \begin{aligned}
        \norm{(\hat v,\hat V)}^2_* &
        \le 4a^2(\int_{M} \dotp{\nabla w}{ \nabla w}_g dS + \norm{w}^2_{\partial M_e}) + \sum_k 2( W_k^2 + (\tfrac{1}{\zeta_k}\int_{\partial M_{e_k}}  v dx)^2 )
        \\&\le 4a^2(\int_{M} \dotp{\nabla w}{ \nabla w}_g dS + \norm{w}^2_{\partial M_e})  + \sum_k ( 2W_k^2 + \tfrac{2s_k^2}{\zeta^2_k}\int_{\partial M_{e_k}}  v^2 dx )
        \\&\le 2b(\int_{M} \dotp{\nabla w}{ \nabla w}_g dS + \norm{w}^2_{\partial M_e} + {W}_{\R^L}^2) = 2b\norm{(w,W)}_*^2.
    \end{aligned}
    \] 
    Now since $\sup\limits_{{(v,V)}\in H} B((w,W),(v,V))/\norm{(v,V)}_H \ge  B((w,W),(\hat v,\hat V))/\norm{(\hat v,\hat V)}_H$, we have that \allowdisplaybreaks
    \begin{align*}
            &\sup\limits_{{(v,V)}\in H} B((w,W),(v,V))/\norm{(v,V)}_H  \ge \frac{c}{\Lambda} \norm{(w,W)}_*^2/\norm{(\hat v,\hat V)}_* \\
            & \ge \frac{c}{\Lambda}  \norm{(w,W)}_*^2/(\sqrt{2b} \norm{(w,W)}_*) =\frac{c}{ \sqrt{2b} \Lambda} \norm{(w,W)}_* \ge \frac{c}{ \sqrt{2b} \Lambda^2}\norm{(w,W)}_H. 
    \end{align*}
    To see that \eqref{eq:bnb2} holds, assume the contrary, i.e. there exists a $(v,V)\neq 0$ so that $B((w,W),(v,V))=0$ holds for all $(w,W)$. If $V=0$ choose $(w,W)=(v,0)$. If $V\neq 0$ choose $(w,W)=(0,V)$. Both scenarios show that $ B((w,W),(v,V))\neq 0$ with the chosen $(w,W)$, i.e. that $B((w,W),(v,V))=0$ does not hold for all $(w,W)$, which is a contradiction, meaning that the condition must hold.

    Finally, to see that $B$ is continuous, i.e. $B((w,W),(v,V)) \le \mathcal{C} \norm{(w,W)}_H\norm{(v,V)}_H$ for some $\mathcal{C} > 0$, observe that
    \[
        - \sum_k^L \tfrac{1}{\zeta_k}\int_{\partial M_{e_k}} wV_k d\tilde S \le \sum_k^L \left\rvert  \tfrac{1}{\zeta_k} \int_{\partial M_{e_k}} wV_k d\tilde S\right\rvert \le a \norm{w}_{\partial M_e} \norm{V}_{\R^L} \le a \norm{(w,W)}_{*}\norm{(v,V)}_{*}.
    \]
    Denoting $\tilde c\defeq \max \lbrace 1,\sigma_M,\zeta_1^{-1}, \zeta_2^{-1},\dots \rbrace$, clearly, 
    \[
        \begin{aligned}
             B((w,W),(v,V)) &\le  \tilde c \norm{(w,W)}_{*}\norm{(v,V)}_{*} - \sum_k^L \tfrac{1}{\zeta_k}\int_{\partial M_{e_k}} wV_k d\tilde S\\ 
             &\le (\tilde c + a) \norm{(w,W)}_{*}\norm{(v,V)}_{*} \le (\tilde c + a)\lambda^{-2} \norm{(w,W)}_{H}\norm{(v,V)}_{H}.
        \end{aligned}
    \]
    This finishes the proof.
\end{proof}
For the next lemma, we will replace $u^p$ and $I^p_k$ by their finite element approximations $u^p = \sum_j^N u^p_j v_j$ and $I^p = \sum_{j=1}^{L-1} (\tilde I_j n_j)$, where we allow $v_j$ to be an arbitrary FE basis function. For $I^p \in \R^L$, we fix basis vectors $n_j \in \R^L$ so that we can utilize Kirchhoff's law to eliminate one of the components: we choose vectors $n_j \in \R^{L}$ such that the components of $n_j$ are $(n_j)_1=1$, $(n_j)_{j+1}=-1$, and otherwise $(n_j)_{k}=0$.
This fixes the value of the $I^p_1$ so that $I_1^p = -\sum_{i=2}^{L} I^p_i $. Indeed, due to the Kirchhoff' law, we only have $L-1$ unknown currents. Note also that $n_j$ no longer appear in the lemma, since the value is easy to determine.

\begin{lemma}
    \label{lemmatrix}
    Replace $H$ by a finite dimensional subspace 
    \[
        H_N = \mathop{\mathrm{span}}\lbrace (v_1, 0),\dots,(v_N, 0), (0, n_1),\dots ,(0, n_{L-1})\rbrace .
    \]
    Then \eqref{eq:bilin} admits the presentation $D \theta = \bar U$, where $D \in \R^{(N+L-1) \times (N+L-1) }$ with
    \[
        D =
        \begin{bmatrix}
            D_1 & 0 \\
            D_2 & D_3,
        \end{bmatrix}
        \quad
        (D_1)_{i,j} = \int_{M} \sigma \dotp{\nabla v_i}{ \nabla v_j}_g dS + \sum_k^L \tfrac{1}{\zeta_k}\int_{\partial M_{e_k}}  v_jv_i  d\tilde S,
    \]
    \[
        (D_2)_{i,j} = \tfrac{1}{\zeta_{i+1}}\int_{\partial M_{e_{i+1}}} v_j d\tilde S -\tfrac{1}{\zeta_1}\int_{\partial M_{e_1}}  v_j d\tilde S,
    \quad
        (D_3)_{i,j} =
        \begin{cases} 
                2, & i=j \\ 
                1, & \text{otherwise}
        \end{cases}
    \] 
    and $\; \bar U \in \R^{N+L-1}$ 
    \begin{equation*}
        (\bar U)_i =
        \begin{cases}
            \sum_k^L \frac{U_k}{\zeta_k} \int_{\partial M_{e_k}} v_i d\tilde S, & i \le N \\
            \frac{U_{i+1}}{\zeta_{i+1}} \int_{\partial M_{e_{i+1}}}1d\tilde S - \frac{U_1}{\zeta_1} \int_{\partial M_{e_1}}1d\tilde S, & \text{otherwise}.
        \end{cases}
    \end{equation*}
    The vector $\theta = (\bar u^p, \bar I^p)$, where $(\bar u^p)_i = u^p_i$, $(\bar I^p)_i = \tilde I^p_i$ contains the coefficients of the finite element approximations for $u^p$ and $I^p_k$. Furthermore, the problem is well-posed, Galerkin orthogonality holds and for the exact solution $(\hat u^p, \hat I^p)$ and some constant $\mathcal{C} > 0$ we have
    \begin{equation}\label{eq:quasibest}
        \norm{(\hat u^p - u^p, \hat I^p - I^p)}_H \le \mathcal{C}  \inf\limits_{(v,V) \in H_N} \norm{(\hat u^p - v, \hat I^p - V^p)}_H.
    \end{equation}
\end{lemma}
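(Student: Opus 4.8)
The plan is to handle the four assertions in sequence, treating the matrix identity as a substitution and reserving the genuine work for well-posedness and quasi-optimality. First I would derive the matrix form by inserting the finite element expansions $u^p=\sum_j^N u^p_j v_j$ and $I^p=\sum_{j=1}^{L-1}\tilde I_j n_j$ into the bilinear form $B$ of \eqref{eq:bilin} and testing successively against the basis elements $(v_i,0)$ and $(0,n_i)$ spanning $H_N$. Testing against $(v_i,0)$ kills every term carrying $V$ and leaves $\int_M \sigma\dotp{\nabla v_i}{\nabla v_j}_g\,dS+\sum_k^L \tfrac1{\zeta_k}\int_{\partial M_{e_k}}v_jv_i\,d\tilde S$, which is exactly $(D_1)_{i,j}$, together with the right-hand side $\sum_k^L\tfrac{U_k}{\zeta_k}\int_{\partial M_{e_k}}v_i\,d\tilde S=(\bar U)_i$. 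Testing against $(0,n_i)$ annihilates the $D_1$-contribution (as then $v=0$), produces $-\sum_k^L\tfrac1{\zeta_k}\int_{\partial M_{e_k}}v_j(n_i)_k\,d\tilde S=(D_2)_{i,j}$ from the structure $(n_i)_1=1,\ (n_i)_{i+1}=-1$, and turns $\sum_k^L I^p_kV_k$ into $\sum_l\tilde I_l\dotp{n_l}{n_i}$, whence $(D_3)_{i,j}=\dotp{n_i}{n_j}$ equals $2$ on the diagonal and $1$ off it. The vanishing upper-right block is immediate, since the trial element $(0,n_j)$ has $u^p=0$ while the test element $(v_i,0)$ has $V=0$.

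For well-posedness I would exploit that $D$ is block lower triangular, so $\det D=\det D_1\cdot\det D_3$ and it suffices to invert each diagonal block. For $D_1$, writing $v=\sum_i c_i v_i$ gives $c^\top D_1 c=\int_M\sigma\dotp{\nabla v}{\nabla v}_g\,dS+\sum_k^L\tfrac1{\zeta_k}\norm{v}_{L^2(\partial M_{e_k})}^2\ge 0$, with equality forcing $\nabla v=0$ (hence $v$ constant, as $M$ is connected) and $v=0$ on the electrodes, so $v\equiv 0$ and $c=0$; thus $D_1$ is symmetric positive definite, using $\sigma\ge\sigma_m>0$, $\zeta_k>0$, and the linear independence of the $v_i$. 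Since $D_3=\Id+\mathbf{1}\mathbf{1}^\top$ has eigenvalues $L$ and $1$ (the latter with multiplicity $L-2$), it too is positive definite. Hence $D$ is invertible and $D\theta=\bar U$ has a unique solution.

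Galerkin orthogonality is then a one-line consequence of conformity: since $H_N\subset H$ and the exact solution $(\hat u^p,\hat I^p)$ satisfies \eqref{eq:bilin} against every $(v,V)\in H$ by \cref{lemmawelp}, subtracting the discrete identity gives $B((\hat u^p-u^p,\hat I^p-I^p),(v,V))=0$ for all $(v,V)\in H_N$. The quasi-optimal bound \eqref{eq:quasibest} is the standard Banach--Nečas--Babuška/Céa estimate: continuity of $B$ with a constant $\mathcal C$, already established in \cref{lemmawelp}, together with a positive discrete inf-sup constant $\beta_h$ and Galerkin orthogonality yields $\norm{(\hat u^p-u^p,\hat I^p-I^p)}_H\le(1+\mathcal C/\beta_h)\inf_{(v,V)\in H_N}\norm{(\hat u^p-v,\hat I^p-V^p)}_H$, which is \eqref{eq:quasibest}.

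The step I expect to be the main obstacle is securing the discrete inf-sup constant $\beta_h$. The explicit test function used in the continuous proof of \cref{lemmawelp}, namely $\hat v=2aw$ and $\hat V_k=W_k+\tfrac1{\zeta_k}\int_{\partial M_{e_k}}w\,d\tilde S$, does \emph{not} in general lie in $H_N$: the current part of $H_N$ is the Kirchhoff-constrained space $\mathop{\mathrm{span}}\{n_1,\dots,n_{L-1}\}=\{V:\sum_k V_k=0\}$, whereas $\sum_k\hat V_k=\sum_k\tfrac1{\zeta_k}\int_{\partial M_{e_k}}w\,d\tilde S$ need not vanish, so the continuous inf-sup does not descend verbatim to the subspace. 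I would circumvent this by reading $\beta_h>0$ off the invertibility of $D$ obtained in the second step---on the finite-dimensional $H_N$, injectivity of the operator associated with $B$ is equivalent to a strictly positive discrete inf-sup constant---which is all \eqref{eq:quasibest} requires, the constant there being permitted to depend on $H_N$.
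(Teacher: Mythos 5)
Your proof is correct, and its matrix-assembly half coincides with the paper's own: both substitute the expansions $u^p=\sum_j u^p_j v_j$, $I^p=\sum_j \tilde I_j n_j$ into \eqref{eq:bilin}, test against $(v_i,0)$ and $(0,n_i)$, and read off $D_1$, $D_2$, $D_3$ (via $\langle n_i,n_j\rangle = 1+\delta_{ij}$) and $\bar U$. Where you genuinely diverge is the well-posedness/quasi-optimality half. The paper disposes of it in one sentence: since $B$ is continuous and satisfies BNB on $H$, it invokes C\'ea's lemma (citing an inf-sup version) to conclude discrete well-posedness, Galerkin orthogonality, and \eqref{eq:quasibest}. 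That citation silently assumes precisely the step you flag as the obstacle: a continuous inf-sup condition does not descend to subspaces (unlike coercivity), and indeed the test pair $\hat v = 2aw$, $\hat V_k = W_k + \tfrac{1}{\zeta_k}\int_{\partial M_{e_k}} w\, d\tilde S$ from \cref{lemmawelp} generally violates the Kirchhoff constraint $\sum_k V_k = 0$ characterizing $\mathrm{span}\{n_1,\dots,n_{L-1}\}$, so it leaves $H_N$. Your replacement argument is sound and more self-contained: $D$ is block lower triangular; $D_1$ is symmetric positive definite because its quadratic form equals $\int_M \sigma\dotp{\nabla v}{\nabla v}_g\, dS + \sum_k \zeta_k^{-1}\norm{v}^2_{L^2(\partial M_{e_k})}$, which vanishes only if $v$ is constant (connectedness of $M$, $\sigma \ge \sigma_m > 0$) and zero on the electrodes, hence $v\equiv 0$; and $D_3 = \Id + \mathbf{1}\mathbf{1}^T$ has spectrum $\{L,1\}$. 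Recovering a discrete inf-sup constant $\beta_h>0$ from injectivity on the finite-dimensional $H_N$ (compactness of the unit sphere) and then running the standard BNB/C\'ea estimate is a complete and rigorous chain, and it supplies the justification the paper's one-line citation leaves implicit. The trade-off: the paper's route, were its hypotheses actually verified, would yield a constant $\mathcal{C}$ controlled by the continuous continuity and inf-sup constants, hence uniform under mesh refinement; your route is airtight for the fixed subspace $H_N$ of the statement but gives $\mathcal{C} = 1 + \mathcal{C}_B/\beta_h$ with $\beta_h$ a priori dependent on $H_N$ --- which the lemma's wording (\enquote{some constant $\mathcal{C}>0$}) permits, though it would not by itself give mesh-uniform convergence rates.
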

\begin{proof}
    Since $B$ is continuous and since BNB holds for $B$, by applying the Cea's lemma \cite[Lemma 1]{lazarov2019inf}, we see that the problem is well-posed also in $H_N$ and the Galerkin orthogonality and the solution \eqref{eq:quasibest} hold for $(u^p, I^p) $. 
    
    Plugging in the expression for $u^p$ gives $N$ equations corresponding to each $(v,V)=(v_i,0)$:
    \begin{equation*}
        \begin{aligned}
            &\int_{M} \sigma \dotp{\nabla v_i}{ \nabla \sum_j^N u^p_j v_j}_g dS + \sum_k^L \tfrac{1}{\zeta_k}\int_{\partial M_{e_k}} \sum_j^N u^p_j v_jv_i  d\tilde S \\ &= \sum_j^N u^p_j \left( \int_{M} \sigma \dotp{\nabla v_i}{ \nabla v_j}_g dS + \sum_k^L \tfrac{1}{\zeta_k}\int_{\partial M_{e_k}}  v_jv_i  d\tilde S\right)
            = \sum_k^L \tfrac{1}{\zeta_k} \int_{\partial M_{e_k}} U_kv_i d\tilde S,
        \end{aligned}
    \end{equation*}
    which can be written with the matrix $D_1$ and vectors $ \bar{u}$ and $ (\bar U_1)_i = \sum_k^L \frac{U_k}{\zeta_k} \int_{\partial M_{e_k}} v_i d\tilde S$ as $D_1\bar u = \bar U_1$.
   Further, plugging in $I^p_k = \sum_{j=1}^{L-1} (\tilde I_j n_j)_k$ and $(v,V)=(0,n_i)$ gives additional $L-1$ equations:
    \begin{equation*}
        \begin{aligned}
        &-\sum_k^L \tfrac{1}{\zeta_k}\int_{\partial M_{e_k}} \sum_j^N u^p_j v_j n_i d\tilde S + \sum_k^L I^p_kV_k= -\sum_k^L \tfrac{1}{\zeta_k}\int_{\partial M_{e_k}} \sum_j^N u^p_j v_j n_i d\tilde S + \sum_k^L (\sum_j^{L-1} \tilde I_j n_j)_k (n_i)_k\\ &= \sum_j^N u^p_j \left( \tfrac{1}{\zeta_{i+1}}\int_{\partial M_{e_{i+1}}} v_j d\tilde S -\tfrac{1}{\zeta_1}\int_{\partial M_{e_1}}  v_j d\tilde S \right)+ \sum_j^{L-1} \tilde I_j \sum_k^L (n_j)_k (n_i)_k = -\sum_k^L \tfrac{1}{\zeta_k} \int_{\partial M_{e_k}} U_k(n_i)_k d\tilde S,
        \end{aligned}
    \end{equation*}
    which can be expressed with $D_2$, $D_2$, $\bar u$, $\bar U_1$, $\bar I$, and
    $
        (\bar U_2)_i = -\sum_k^L \tfrac{1}{\zeta_k} \int_{\partial M_{e_k}} U_k(n_i)_k d\tilde S = \frac{U_{i+1}}{\zeta_{i+1}} \int_{\partial M_{e_{i+1}}}1d\tilde S - \frac{U_1}{\zeta_1} \int_{\partial M_{e_1}}1d\tilde S
    $
    as $D_2 \bar u = \bar U_1$ and $D_3 \bar I=\bar U_2$. Finally, by combining the results we have  
    \begin{equation}\label{eq:feacem}
        \begin{bmatrix}
            D_1 & 0 \\
            D_2 & D_3
        \end{bmatrix}
        \begin{bmatrix}
            \bar u\\
            \bar I
        \end{bmatrix}= \begin{bmatrix}
            \bar U_1 \\ \bar U_2
        \end{bmatrix}.
    \end{equation}
    This finishes the proof.
\end{proof}

\subsection{Computing the finite element approximation}
In \cref{lemmatrix}, we derived the matrix presentation of the FE approximation. This matrix form, however, is abstract in the sense that the integrals are still presented in $M$ and $\partial M$. 

To proceed further, we consider specific maps to the manifold $M$. We define the elementary triangle by $T := \left\lbrace y \in \R^2 \; |\;  y^1, y^2 \ge 0,\; y^1 + y^2 \le 1 \right\rbrace$ and denote the boundary segments of this triangle by $\partial_1 T := \left\lbrace y \in T \; |\;  y^2 = 0 \right\rbrace$, $\partial_2 T := \left\lbrace y \in T \; |\;  \right. \left. y^1 = 0 \right\rbrace$, and $\partial_3 T := \left\lbrace y \in T \; |\;  y^1 = y^2\right\rbrace$. Further, we assume that there exists a triangulation of $M$, meaning that:
\begin{itemize}
    \item There are domains $E_q$ (in practice "geodesic triangles", compare \cref{fig:Drawing}), such that $M=E_1 \cup E_1...\cup E_K$.
    \item For each $q$ there exists an orientation preserving diffeomorphism $F_q(T) = E_q$ (i.e. $F_q:T \to E_q$ and $F_q^{-1} :E_q \to T$ are differentiable bijections and the Jacobian determinant of $F_q$ is positive). Intuitively, this means that $T$ can be morphed smoothly  to $E_q$ and $E_q$ back to $T$ through $F_q$ and $F_q^{-1}$, so that the right-hand (or left-hand) rule is preserved.
    \item The subdomains $E_q$ and $E_r$ for $q \neq r$ and the subdomains $F_q(\partial_\alpha T) \subset \partial M$ and $F_r(\partial_\beta T) \subset \partial M$ for $(q,\alpha) \neq (r,\beta)$ may only intersect at the boundaries. 
    \item For each electrode $e_k$ some collection $\mathcal{B}_k$ of $r$ and $\beta$ corresponds to the boundary segment of $e_k$, i.e. $\partial M_{e_k} = \bigcup\limits_{(r, \beta) \in \mathcal{B}_k} F_r(\partial_\beta T)$.
\end{itemize}

In addition, we need a presentation for the metric $g$ on $T$. Recall that $g$ has the specific form $g(w_1, w_2) = \tilde g (d\phi(w_1), d\phi(w_2))$. 
Clearly, this metric has a matrix presentation $g = J_q^TJ_q$ \cite[Chapter 5]{spink2017} on $T$, where $J_q$ is the Jacobian matrix of $F_q$. Similarly, if $\gamma_\alpha(t)$ is a curve that maps $[0,1]$ to one of the boundaries $\partial_\alpha T$, then $g_\omega(w) =  g (d\gamma_\alpha(w))$ has a presentation $g_\omega = {J_q}_{\gamma\alpha}^T{J_q}_{\gamma\alpha}$ in $[0,1]$, where ${J_q}_{\gamma\alpha}$ is the Jacobian of $F_q(\gamma_\alpha(t))$.

Now we derive the exact forms for the integrals of \eqref{lemmatrix} in terms of $T$ and the interval $[0,1]$.
The next lemma combined with \cref{lemmatrix} finally allows us to calculate the matrices and vectors of the FE system that determines $I(\sigma)$ in \eqref{eq:minJ}.
The RIPGN algorithm \cite{ripgn} utilizes these vectors and matrices to solve \eqref{eq:minJ}.

\def\II{\mathcal{I}}
\begin{lemma}\label{lemma:ints}
    Assume that there exists a triangulation $\lbrace E_q\rbrace $ of $M$. The integrals in \cref{lemmatrix} have the following presentations;
    \[
        \int_{M} \sigma \dotp{\nabla v_i}{ \nabla v_j}_g dS = \sum_q \int_{T} \sigma(F_q(y))   (\nabla_y v_i(F_q(y))  )^Tg^{-1}(\nabla_y v_j(F_q(y)) )  \sqrt{|g|} dy^1dy^2,
    \]
    \[
        \sum_k^L \tfrac{1}{\zeta_k}\int_{\partial M_{e_k}}  v_jv_i  d\tilde S = \sum_k^L \sum_{(q,\alpha) \in \II(i,j,k)}   \frac{U_k}{\zeta_k} \int_0^1 v_j(F_q(\gamma_\alpha(t))) v_i(F_q(\gamma_\alpha(t))) \sqrt{g_\omega} dt,
    \]
    \[
        \frac{U_k}{\zeta_k} \int_{\partial M_{e_k}} v_j d\tilde S = \sum_{(q,\alpha) \in \II(j,k)} \frac{U_k}{\zeta_k} \int_0^1 v_j(F_q(\gamma_\alpha(t))) \sqrt{g_\omega} dt,
    \] and
    \[
        \frac{U_{k}}{\zeta_k} \int_{\partial M_{e_{k}}}1d\tilde S = \sum_{(q,\alpha) \in \II(k)} \frac{U_k}{\zeta_k} \int_0^1 \sqrt{g_\omega} dt,
    \]
    where $g^{-1}$ is a matrix representing the coefficients of $g^{ij}$, $\nabla_y f$ is the gradient of $f$ with respect to the variable $y \in T \subset \R^2$, and 
    \begin{align*}
        \II(i,j,k) &\defeq \left\lbrace  (q,\alpha) \; |\; \text{$\partial F_q(\partial_\alpha T)$ is under an electrode $k$, } v_i(F_q(\partial_\alpha T)) \neq \lbrace 0 \rbrace \text{, and }  v_j(F_q(\partial_\alpha T)) \neq \lbrace0\rbrace.  \right\rbrace, \\
        \II(i,k) &\defeq \left\lbrace  (q,\alpha) \; |\; \text{$\partial F_q(\partial_\alpha T)$ is under an electrode $k$ and } v_i(F_q(\partial_\alpha T)) \neq \lbrace 0 \rbrace. \right\rbrace\text{, and} \\
        \II(k) &\defeq \left\lbrace  (q,\alpha) \; |\; \text{$\partial F_q(\partial_\alpha T)$ is under an electrode $k$.} \right\rbrace. 
    \end{align*}
\end{lemma}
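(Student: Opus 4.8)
The plan is to reduce each integral over $M$ or $\partial M$ to a sum of integrals over the reference triangle $T$ (respectively the interval $[0,1]$) via the change-of-variables formula for integration on a Riemannian manifold. Since the triangulation $\{E_q\}$ covers $M$ and the pieces overlap only on their boundaries, which form a set of measure zero, I may write $\int_M (\freevar)\, dS = \sum_q \int_{E_q} (\freevar)\, dS$ and then pull each summand back to $T$ through the orientation-preserving diffeomorphism $F_q$. The same decomposition applies on the boundary: the triangulation hypothesis gives $\partial M_{e_k} = \bigcup_{(r,\beta)\in\mathcal{B}_k} F_r(\partial_\beta T)$, so each electrode integral splits into a sum over the edge segments lying under electrode $k$.

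For the area integral, I would first recall that under $F_q$ the Riemannian volume element transforms as $dS = \sqrt{\abs{g}}\, dy^1 dy^2$, where $\abs{g} = \det(J_q^T J_q)$ follows from the matrix presentation $g = J_q^T J_q$ established before the lemma. Next I would express the metric inner product of the two gradients in coordinates: starting from the contravariant coordinate formula \eqref{eq:grad} for $\grad$, a short computation using $\sum_a g^{ab} g_{ac} = \delta^b_c$ collapses the double contraction to $\dotp{\grad v_i}{\grad v_j}_g = (\nabla_y v_i)^T g^{-1} (\nabla_y v_j)$, where $\nabla_y$ is the ordinary Euclidean gradient in the chart variable $y$ and $g^{-1}$ carries the entries $g^{ij}$. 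Substituting both facts into $\int_{E_q}$ and summing over $q$ yields the first claimed formula.

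For the three boundary integrals the argument is parallel but one dimension lower. I would parametrize each relevant edge $\partial_\alpha T$ by the curve $\gamma_\alpha:[0,1]\to\partial_\alpha T$, so that $F_q\circ\gamma_\alpha$ traces the corresponding segment of $\partial M_{e_k}$. The induced one-dimensional Riemannian measure pulls back to $d\tilde S = \sqrt{g_\omega}\, dt$, with $g_\omega = {J_q}_{\gamma\alpha}^T {J_q}_{\gamma\alpha}$ the scalar boundary metric identified before the lemma. Substituting the appropriate pulled-back integrand ($v_i v_j$, $v_j$, or $1$) together with this length element into each electrode integral produces the three boundary formulas. The index sets $\II(i,j,k)$, $\II(i,k)$, $\II(k)$ simply record which pairs $(q,\alpha)$ contribute: a segment enters the sum only when it lies under electrode $k$ and the basis functions occurring in the integrand do not vanish identically on it, so every omitted term is already zero and the restriction to these sets is lossless.

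The routine part is the bookkeeping of the index sets and the measure-zero overlaps. The one genuinely computational step, and the place where an error is easiest to make, is the coordinate reduction of $\dotp{\grad v_i}{\grad v_j}_g$ to $(\nabla_y v_i)^T g^{-1} (\nabla_y v_j)$: here it is essential to use the contravariant form of $\grad$ from \eqref{eq:grad}, so that $g^{-1}$ rather than $g$ appears, and to keep the raising and lowering of indices consistent, since an inverted metric at this point would silently produce the wrong stiffness matrix.
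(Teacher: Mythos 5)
Your proposal is correct and takes essentially the same route as the paper's own proof: the paper likewise decomposes the integrals along the triangulation and pulls back through $F_q$ (citing Lee's Proposition 10.21 for the sum-over-charts step), reduces $\dotp{\nabla v_i}{\nabla v_j}_g$ to $(\nabla_y v_i)^T g^{-1} (\nabla_y v_j)$ via $\nabla f = g^{-1}\nabla_y f$ together with the symmetry of $g$, and parametrizes the edge segments by $\gamma_\alpha$ with $d\tilde S = \sqrt{g_\omega}\,dt$, dropping the identically vanishing terms to arrive at the index sets $\II$. Your emphasis on keeping the contravariant form of the gradient (so that $g^{-1}$, not $g$, appears in the stiffness integrand) is exactly the step the paper's chain of equalities carries out explicitly.
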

\begin{proof}
    Since $T$ and $\partial T$ are compact, $F_q$ are diffeomophic, $\partial M_{e_k} = \bigcup_{(r, \beta) \in \mathcal{B}_k} F_r(\partial_\beta T)$ and $M=E_1 \cup E_1...\cup E_K$, where  $q \neq r$ and $(q,\alpha) \neq (r,\beta)$ only intersect at their boundaries, the conditions of \cite[Proposition 10.21]{leeintroduction} are met for $M$ and for each $\partial M_{e_k}$, and the integrals of \eqref{eq:bilin} defined in $ M$ and in $\partial M_{e_k}$ can be expressed as sums of Riemannian integrals in $\R$ or $\R^2$ over the sets $T$ and $\partial T$, i.e., for $f: M \to \R$ and $\tilde f: \partial M_{e_k} \to \R$, i.e.
    \[
        \int_{M} f dS = \sum_q \int_{T} F_q^*(fdS) = \sum_q \int_{T} (f\circ F_q) \sqrt{|g|} dy \quad\text{and}
    \]
    \[
        \int_{\partial M_{e_k}} \tilde f d\tilde S = \sum_{(r, \beta) \in \mathcal{B}_k} \int_{\partial_\beta T} F_r^*(\tilde fd \tilde V) = \sum_{(r, \beta) \in \mathcal{B}_k} \int_{\partial_\beta T} (\tilde f\circ F_r) \sqrt{|g_\omega|} d\tilde y
    \]   
    where $y\in T$ and $\tilde y \in \partial_\beta T$ \cite[Proposition 11.25, Proposition 15.31, and page 402]{lee2012introduction}.

    Since $g^{ij}$ are the indices of the inverse of the matrix representing $g$ \cite[Page 342]{lee2012introduction}, $ \nabla f = g^{-1}\nabla_y f$ in $T$. Furthermore, we denote $y \defeq (y^1,y^2) \in T$ and $dy\defeq dy^1dy^2$. Now 
\[
    \begin{aligned} 
        \int_{M} \sigma \dotp{\nabla v_i}{ \nabla v_j}_g dS 
        &= \sum_q \int_{T} \sigma(F_q (y)) \dotp{\nabla v_i(F_q (y))}{ \nabla v_j(F_q (y))} \sqrt{|g|} dy\\
        &= \sum_q \int_{T} \sigma(F_q(y))   (g^{-1} \nabla_y v_i(F_q(y))  )^Tg(g^{-1} \nabla_y v_j(F_q(y)) )  \sqrt{|g|} dy \\
        &= \sum_q \int_{T} \sigma(F_q(y))   (\nabla_y v_i(F_q(y))  )^Tg^{-1}(\nabla_y v_j(F_q(y)) )  \sqrt{|g|} dy , 
    \end{aligned}
\]
since $g$ (and $g^{-1}$) is symmetric.

Since $\partial M_{e_k} = \cup_{r, \beta} F_r(\partial_\beta T)$ for some $\mathcal{B}_k$, and since $F_r(\partial_\beta T)$ may only intersect at a single point, the boundary integrals in \cref{lemmatrix} can be mapped to the interval $[0,1]$ by composing the appropriate $F_r$ with one of the curves, $\gamma_1(t)=(t,0)$, $\gamma_2(t)=(0,t)$, or $\gamma_3(t)=(t,1-t)$, depending on which segments of  $\partial_\beta T$ constitute to $\partial M_{e_k}$ under $F_r$. On these boundaries, we may write $d \tilde V = \sqrt{|g_\omega|} dt$. Now, since the boundary integrals in \cref{lemmatrix} comprise only terms that correspond to an electrode $e_k$, and since these terms are zero if either a $v_i$ or a $v_j$ in the term is identically zero on $\partial M_{e_k}$, we are left with $\II$ as defined in the statement of the lemma. As an example
\[
    \begin{aligned} 
        &\sum_k^L \tfrac{1}{\zeta_k}\int_{\partial M_{e_k}}  v_jv_i  d\tilde S = \sum_k^L \sum_{(q, \alpha) \in \mathcal{B}_k} \int_{\partial_\alpha T}   \frac{U_k}{\zeta_k} v_j(F_q(\tilde y)) v_i(F_q(\tilde y)) \sqrt{g_\omega} d \tilde y\\
        &= \sum_k^L \sum_{(q, \alpha) \in \mathcal{B}_k} \int_0^1   \frac{U_k}{\zeta_k} v_j(F_q(\gamma_\alpha(t))) v_i(F_q(\gamma_\alpha(t))) \sqrt{g_\omega} d t.
    \end{aligned}   
\]
Now $\int_0^1   \frac{U_k}{\zeta_k} v_j(F_q(\gamma_\alpha(t))) v_i(F_q(\gamma_\alpha(t))) \sqrt{g_\omega} d t=0$ if either $v_i(F_q(\gamma(t))) \equiv 0 $ or $  v_j(F_q(\gamma(t))) \equiv 0$, meaning that we can replace $(q, \alpha) \in \mathcal{B}_k$ with ${(q,\alpha) \in \II(i,j,k)}$.

\end{proof}

\bibliographystyle{jnsao}
\bibliography{abbrevs,npsensingskin}


\end{document}